\documentclass[9pt]{article}
\usepackage[left=0.9in,top=1.2in,right=0.9in,bottom=1.2in,head=1.25in]{geometry}
\setlength{\parskip}{3ex} \setlength{\parskip}{3ex}

\usepackage{amsfonts,amsmath,amssymb,amsthm}
\usepackage{verbatim,float,url,enumerate}
\usepackage{graphicx,subfigure,epsfig,psfrag}
\usepackage{dsfont,bm,color,appendix}
\usepackage{natbib}
\usepackage{pdfpages}
\usepackage{algorithm}
\usepackage{algpseudocode}
\usepackage{hyperref}
\usepackage{bbm}
\usepackage[algo2e]{algorithm2e}
\usepackage{enumitem}
\usepackage[thinc]{esdiff}
\usepackage{afterpage}
\usepackage{amsmath,amssymb,bm}
\usepackage{booktabs}
\usepackage{multirow}
\usepackage{array}
\usepackage{xcolor}
\usepackage[symbol]{footmisc}
\usepackage[font=footnotesize,labelfont=bf]{caption}
\usepackage{xcolor}
\usepackage{dblfloatfix}

\newcommand{\mR}{\mathcal{R}}

\newcommand{\norm}[1]{\left\|{#1}\right\|} 
\newcommand{\be}{\begin{equation}}
\newcommand{\ee}{\end{equation}}
\renewcommand{\vec}[1]{\bm{#1}}
\newcommand{\tx}{\vec{\theta_x}}
\newcommand{\tz}{\vec{\theta_z}}
\newcommand{\ttheta}{\vec{\theta}}
\newcommand{\bbeta}{\vec{\beta}}
\newcommand{\y}{\vec{y}}
\newcommand{\z}{\vec{z}}
\newcommand{\1}{\vec{1}}
\newcommand{\thx}{\vec{\hat\theta_x}}
\newcommand{\thz}{\vec{\hat\theta_z}}
\newcolumntype{M}[1]{>{\centering\arraybackslash}m{#1}}


\newcommand{\p}[1]{\left(#1\right)}
\newcommand{\sqb}[1]{\left[#1\right]}

\newcommand{\EE}[2][]{{\rm E}_{#1}\left[#2\right]}
\newcommand{\PP}[2][]{{\rm P}_{#1}\left[#2\right]}
\newcommand{\Var}[2][]{\operatorname{Var}_{#1}\left[#2\right]}
\newcommand{\Cov}[2][]{\operatorname{Cov}_{#1}\left[#2\right]}

\newcommand{\indp}{\;\rotatebox[origin=c]{90}{$\models$}\;}

\newcommand{\gy}{\gamma_y}
\newcommand{\gx}{\gamma_x}
\newcommand{\gz}{\gamma_z}

\newcommand{\thetas}{\theta^{\star}}
\newcommand{\new}{_{\operatorname{new}}}
\def\trans{^{\scriptscriptstyle \sf T}}
\newcommand{\thetah}{\hat{\theta}}

\newcommand{\bx}{\boldsymbol{x}}
\newcommand{\by}{\boldsymbol{y}}
\newcommand{\bz}{\boldsymbol{z}}
\newcommand{\bu}{\boldsymbol{u}}

\newcommand{\oo}{\mathcal{O}}

\theoremstyle{plain}
\newtheorem{prop}{Proposition}

\newtheorem{lemm}{Lemma}

\theoremstyle{definition}

\theoremstyle{remark}


\title{Cooperative Learning for Multiview Analysis}
\author{Daisy Yi Ding$^{2}$, Shuangning Li$^{1}$, Balasubramanian Narasimhan$^{1,2}$, and Robert Tibshirani$^{1,2}$  \\
$^{1}$Department of Statistics, Stanford University  \\ 
$^{2}$Department of Biomedical Data Science, Stanford University}
\date{}

\begin{document}

\maketitle
\begin{abstract}
    We propose a new method for supervised learning with multiple sets of features (``views''). The multiview problem is especially important in biology and medicine, where ``-omics'' data such as genomics, proteomics and radiomics are measured on a common set of samples.
 {\em Cooperative learning} combines the usual squared error loss of predictions with an  ``agreement'' penalty to encourage the predictions from different data views to agree. By varying the weight of the agreement penalty, we get a continuum of solutions that include the
   well-known {\em early} and {\em late fusion} approaches.
   Cooperative learning chooses the degree of agreement (or fusion) in an adaptive manner, using a validation set or cross-validation to estimate test set prediction error.
   One version of our fitting procedure is modular, where one can choose different fitting mechanisms (e.g. lasso, random forests, boosting, neural networks) appropriate for different data views. 
    In the setting of cooperative regularized linear regression, the method combines the lasso penalty with the agreement penalty, yielding feature sparsity.
    The method can be especially powerful when the different data views share some underlying relationship in their signals that can be exploited to boost the signals.
    We show that cooperative learning achieves higher predictive accuracy on simulated data and a real multiomics example of labor onset prediction.
    Leveraging aligned signals and allowing flexible fitting mechanisms for different modalities, cooperative learning offers a powerful approach to multiomics data fusion.
\end{abstract}

\section{Introduction}

With new technologies in biomedicine, we are able to generate and collect data of various modalities, including genomics, epigenomics, transcriptomics, proteomics, and metabolomics (Fig. 1{\em A}).
Integrating heterogeneous features on a common set of observations provides a unique opportunity to gain a comprehensive understanding of an outcome of interest.
It offers the potential for making discoveries that are hidden in data analyses of a single modality and achieving more accurate predictions of the outcome \citep{kristensen2014principles, ritchie2015methods, robinson2017integrative, karczewski2018integrative, ma2020integrative, hao2021integrated}. 
While ``multiview data analysis'' can mean different things, we use it here in the context of supervised learning, where the goal is to fuse different data views to model an outcome of interest.

To give a concrete example, assume that a researcher wants to predict cancer outcomes from RNA expression and DNA methylation measurements for a set of patients.
The researcher suspects that:  (1) both data views potentially have prognostic value; (2) the two views share some underlying relationship with each other, as DNA methylation regulates gene expression and can repress the expression of tumor suppressor genes or promote the expression of oncogenes.
Should the researcher use both data views for downstream prediction, or just use one view or the other?
If using both views, how can the researcher leverage their underlying relationship in making more accurate prediction?
Is there a way to strengthen the shared signals in the two data views while reducing idiosyncratic noise?

There are two broad categories of existing ``data fusion methods'' for the multiview problem (Fig. 1{\em B}).
They differ in the stage at which the ``fusion'' of predictors takes place, namely early fusion and late fusion.
Early fusion works by transforming the multiple data views into a single representation before feeding the aggregated representation into a supervised learning model of choice \citep{yuan2014assessing,gentles2015integrating,perkins2018precision,chaudhary2018deep}. 
The simplest approach is to column-wise concatenate the $M$ datasets $X_{1}, \ldots, X_M$ to obtain a combined matrix $X$, which is then used as the input to a supervised learning model.
Another type of early fusion approach projects each high-dimensional dataset into a low-dimensional space using methods such as principal component analysis or autoencoders \citep{wold1987principal, vincent2010stacked}. Then one combines the low-dimensional representations through aggregation and feed the aggregated matrix into a supervised learning model.
Early fusion approaches have an important limitation that they do not explicitly leverage the underlying relationship across data views.
Late fusion, or ``integration'', refers to methods where individual models are first built from the distinct data views, and then the predictions of the individual models are combined into the final predictor \citep{yang2010review, zhao2019learning, chen2020pathomic, chabon2020integrating, wu2020integrative}. 

\begin{figure}
    \centering
    \includegraphics[width=1\textwidth]{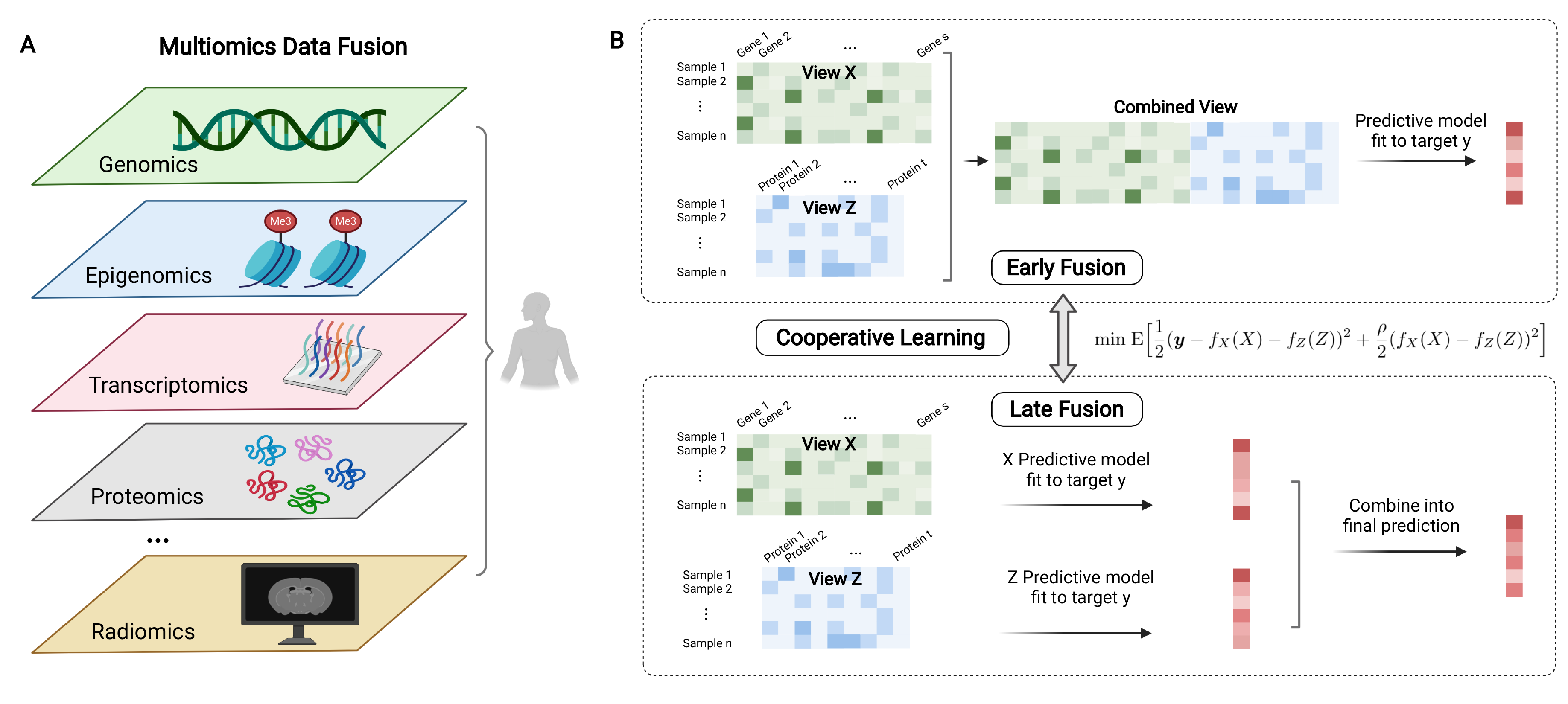}
    \caption{{\em Framework for multiomics data fusion.} {\em (A)} Advances in biotechnologies have enabled the collection of a myriad of ``-omics'' data ranging from genomics to proteomics measured on a common set of samples.
    These data capture the molecular variations of human health at multiple levels and can help us understand complex biological systems in a more comprehensive way.
    Fusing the data offers the potential to improve predictive accuracy of disease phenotypes and treatment response, thus enabling better diagnostics and therapeutics.
    However, multiview analysis of omics data presents challenges such as increased dimensionality, noise and complexity.
    {\em (B)} Commonly-used approaches to the problem can be broadly categorized into early and late fusion. Early fusion begins by transforming all datasets into a single representation, which is then used as the input to a supervised learning model of choice. Late fusion works by developing first-level models from individual data views and then combining the predictions by training a second-level model as the final predictor. Encompassing early and late fusion, cooperative learning combines the usual squared error loss of predictions with an agreement penalty term to encourage the predictions from different data views to align.}
    \label{fig:early_late_fusion}
\end{figure}

In this paper, we propose a new method to multiview data analysis called {\em cooperative learning}, a supervised learning approach that fuses the different views in a systematic way.
The method combines the usual squared error loss of predictions with an ``agreement'' penalty that encourages the predictions from different data views to align.
By varying the weight of the
agreement penalty, we get a continuum of solutions that include the commonly-used early and late
fusion approaches. 
Our proposal can be especially powerful when the different data views share some underlying relationship in their signals that can be leveraged to strengthen the signals.

The rest of the paper is organized as follows.
In Section \ref{sec:coop_learning_two}, we introduce cooperative learning and characterize its solution.
This involves the iterative algorithm for general form of cooperative learning and the explicit closed-form solution for cooperative regularized linear regression.
We discuss its relation with early and late fusion, as well as other existing approaches, and establish theoretical underpinnings of our approach.
In Section \ref{sec:morethan2}, we extend cooperative learning to settings when we have more than two data views.
We demonstrate in Section \ref{sec:simulation_all} the effectiveness of cooperative learning in simulation studies, where we compare it to several commonly-used approaches.
In Section \ref{sec:real}, we apply cooperative learning on a real multiomics example and show that it achieves higher predictive accuracy on labor onset prediction.
In Section \ref{sec:glm_cox}, we discuss how cooperative learning can be extended to generalized linear models and Cox proportional hazards models.
We outline how the framework can be extended to paired features and interaction models in Section \ref{sec:extensions}. 
The paper ends with a discussion and an appendix.

\section{Cooperative learning}
\label{sec:coop_learning_two}
\subsection{Cooperative learning with two data views}
\label{sec:coop_learning_general}
We begin with a simple form of our proposal for the population (random variable) setting.
Let $X \in
\mR^{n \times p_x}$, $Z \in
\mR^{n \times p_z}$ --- representing two data views --- and  $\vec{y} \in \mR^{n}$ be a real-valued random variable (the target). Fixing the
hyperparameter $\rho\geq 0$, we propose to minimize the population quantity:
\begin{equation}
{\rm min} \;  {\rm E}\Bigl[\frac{1}{2}  (\y-f_X(X)-f_Z(Z))^2+ \frac{\rho}{2}(f_X(X)-f_Z(Z))^2\Bigr].
\label{eq:zero0}
\end{equation}

The first term above is the usual prediction error, while
the second term is an ``agreement'' penalty,
encouraging the predictions from different views to agree.
This penalty term is related to ``contrastive learning'' \citep{chen2020simple, khosla2020supervised}, which we discuss in more detail in Section \ref{sec:related_work}.

The solution to \eqref{eq:zero0} has fixed points:

\begin{eqnarray}
f_X(X)&=&{\rm E} \Bigl[\frac{\y}{1+\rho}-\frac{(1-\rho)f_Z(Z)}{(1+\rho)}|X\Bigr],\cr
f_Z(Z)&=&{\rm E} \Bigr[\frac{\y}{1+\rho}-\frac{(1-\rho)f_X(X)}{(1+\rho)}|Z\Bigr].
\label{eq:gen}
\end{eqnarray}

 We can optimize the objective by repeatedly updating the fit for each data view in turn, holding the other view fixed. When updating a function, this approach allows us to apply the fitting method for that data view to a penalty-adjusted ``partial residual''.
 For more than two views, this generalizes easily, with details given in Section \ref{sec:morethan2}.

 The following relationships to early and late fusion can be seen immediately:
 
 \begin{itemize}
     \item   If $\rho=0$, from \eqref{eq:zero0} we see that cooperative learning chooses a functional form for $f_X$ and $f_Z$ and fits them together. If these functions are additive (for example, linear) then it yields  a simple form of early fusion, where we simply use the combined set of features in a supervised learning procedure.
     \item If $\rho=1$, then from \eqref{eq:gen} we see that the solutions
     are the average of the marginal fits for $X$ and $Z$. This is a simple form of late fusion. 
 \end{itemize}
 We explore the relation of cooperative learning to early/late fusion in more detail in Section \ref{sec:rel}, in the setting of regularized linear regression.
 
Note that this ``one-at-a-time'' fitting procedure is modular, so that we can choose a fitting mechanism appropriate for each data view. Specifically:
 
 \begin{itemize}
     \item For {\em quantitative features} like gene expression, copy number variation, or methylation: 
     regularized regression (lasso, elastic net), a generalized additive model, boosting, random forests, or  neural networks.
\item For {\em images}: a convolutional neural network.
\item For {\em time series data}: an auto-regressive model or a recurrent neural network.

\end{itemize}
We illustrate this on a simulated image and omics example in Section \ref{sec:sim_imaging}.

\subsection{Cooperative regularized linear regression}

We make our proposal more concrete in the setting of cooperative regularized linear regression.
Consider feature matrices $X \in
\mR^{n\times p_x}$, $Z \in
\mR^{n\times p_z}$,  and our target $\y \in \mR^{n}$. 
We assume that the columns of $X$ and $Z$ have been standardized, and $\y$ has mean 0 (hence we can omit the intercept below).
For a fixed
value of the hyperparameter $\rho\geq 0$, we want to find $\tx \in \mR^{p_x}$
and $\vec{\tz} \in \mR^{p_z}$ that minimize:

\begin{equation}
J(\tx,\tz) =  \frac{1}{2}  ||\y-X\tx- Z\tz||^2+  \frac{\rho}{2}||(X\tx- Z\tz)||^2+
\lambda_x P^x(\tx)+ \lambda_z P^z(\tz),
\label{eq:obj1}
\end{equation}
where $\rho$ is the hyperparameter that controls the relative importance of the agreement term $||(X\tx- Z\tz)||^2$ in the objective,
and $P^x$ and $P^z$ are penalty functions. 
Most commonly, we use $\ell_1$ penalties, giving the objective function:

\begin{equation}
J(\tx,\tz) =  \frac{1}{2}  ||\y-X\tx- Z\tz||^2+ \frac{\rho}{2}||(X\tx- Z\tz)||^2+
\lambda_x ||\tx||_1+ \lambda_z ||\tz||_1.
\label{eq:obj2}
\end{equation}
Note that when $\rho = 0$, this reduces to early fusion, where we simply concatenate the columns of $X$ and $Z$ and apply lasso.
Furthermore, in Section \ref{sec:rel}, we show that $\rho=1$
yields a late fusion estimate.

In our experiments, we standardize the features and simply take
$\lambda_x=\lambda_z=\lambda$.
We have found that generally there is often no advantage to allowing different $\lambda$ values for different views. 
However, for completeness, in Appendix Section \ref{sec:adap_coop},
we outline an adaptive strategy for optimizing over $\lambda_x$ and $\lambda_z$. We call this {\em adaptive cooperative learning} in our studies. 


With a common $\lambda$, the objective becomes
\begin{equation} 
J(\tx,\tz) =  \frac{1}{2}  ||\y-X\tx- Z\tz||^2+ \frac{\rho}{2}||(X\tx- Z\tz)||^2+
\lambda( ||\tx||_1+  ||\tz||_1), 
\label{eq:sol_full2}
\end{equation}
and we can compute a regularization path of solutions indexed
by $\lambda$.

Problem (\ref{eq:sol_full2})  is convex, and the solution can be computed  as follows. Letting 
\begin{equation}
\tilde X=
\begin{pmatrix}
 X   & Z\\
-\sqrt{\rho}X &  \sqrt{\rho}Z
\end{pmatrix},  
\tilde \y=
\begin{pmatrix}
\y \\
\vec{0}
\end{pmatrix}, 
\tilde \bbeta=
\begin{pmatrix}
\tx \\ \tz  
\end{pmatrix},
\label{eq:sol2}
\end{equation}
then the equivalent problem to \eqref{eq:sol_full2} is 
\begin{equation} 
J(\tx, \tz)=\frac12||\tilde \y-\tilde X\tilde \bbeta||^2+\lambda( ||\tx||_1 +||\tz||_1) .
\label{eq:sol_full}
\end{equation}
This is a form of the lasso, and can be computed, for example
by the {\tt glmnet} package \citep{FHT2010}. 
This new problem has $2n$ observations and $p_x+p_z$ features.

Let ${\rm Lasso}(X,{\vec{y}},\lambda)$ 
 denote the generic problem:
\be
{\rm min}_{\bbeta} \;  \frac12\|{\vec{y}}- X\bbeta\|^2 +
\lambda \|\bbeta\|_1.
\ee
We outline the direct algorithm for cooperative regularized regression in Algorithm \ref{alg:full_direct}.

\begin{algorithm}[t]
\caption{\em Direct algorithm for cooperative regularized regression.}\label{alg:direct_alg}
\KwIn{$X \in \mR^{n\times p_x}$ and $Z \in
\mR^{n\times p_z}$, the response $\vec{y} \in \mR^{n}$, and a grid of hyperparameter values ($\rho_{\tt min}, \ldots, \rho_{\tt max}).$}
\vspace{2mm}

\For{$\rho \gets \rho_{\tt min},\hdots, \rho_{\tt max}$}{
     Set \begin{align*} \tilde X=
\begin{pmatrix}
 X   & Z\\
-\sqrt{\rho}X &  \sqrt{\rho}Z
\end{pmatrix}, \tilde \y=
\begin{pmatrix}
\y \\
\vec{0}
\end{pmatrix}.
\end{align*}
Solve ${\rm Lasso}(\tilde X, \tilde \y, \lambda$) over a decreasing grid of $\lambda$ values. 
}
\vspace{2mm}

Select the optimal value of $\rho^{*}$ based on the CV error and get the final fit.

\label{alg:full_direct}
\end{algorithm}

{\bf Remark A.} We note that for cross-validation (CV) to estimate $\lambda$ and $\rho$, we do not form folds from the rows of $\tilde X$, but instead form folds from the rows of $X$ and $Z$  and then construct the corresponding $\tilde X$.

{\bf Remark B.} We can add $\ell_2$ penalties to
the objective in \eqref{eq:sol_full2}, replacing
$\lambda( ||\tx||_1+||\tz||_1)$ by the elastic net form
\begin{equation}\lambda \Bigl[(1-\alpha)(||\tx||_1+||\tz||_1)+ \alpha(||\tx||_2^2/2+||\tz||_2^2/2)\Bigr].
\end{equation}
This leads to elastic net fitting, in place of the lasso,
in the last step of the algorithm. This option is included in our publicly available software implementation
of cooperative learning.

We show here an illustrative simulation study of cooperative learning in the regression setting in Figure 2{\em A}. 
We will discuss more comprehensive studies in Section \ref{sec:simulation_all}. 
In Figure 2{\em A}, the first and second plots correspond to the settings where the two data views $X$ and $Z$ are correlated, while in the third plot $X$ and $Z$ are uncorrelated.
We see that when the data views are correlated, cooperative learning offers significant performance gains over the early and late fusion methods, by encouraging the predictions from different views to agree.
When the data views are uncorrelated and only one view $X$ contains signal as in the third plot, early and late fusion methods hurt performance as compared to the separate model fit on only $X$, while adaptive cooperative learning is able to perform on par with the separate model.

\begin{figure}[h]
    \centering
    \includegraphics[scale=0.36, width=1.01\textwidth, clip]{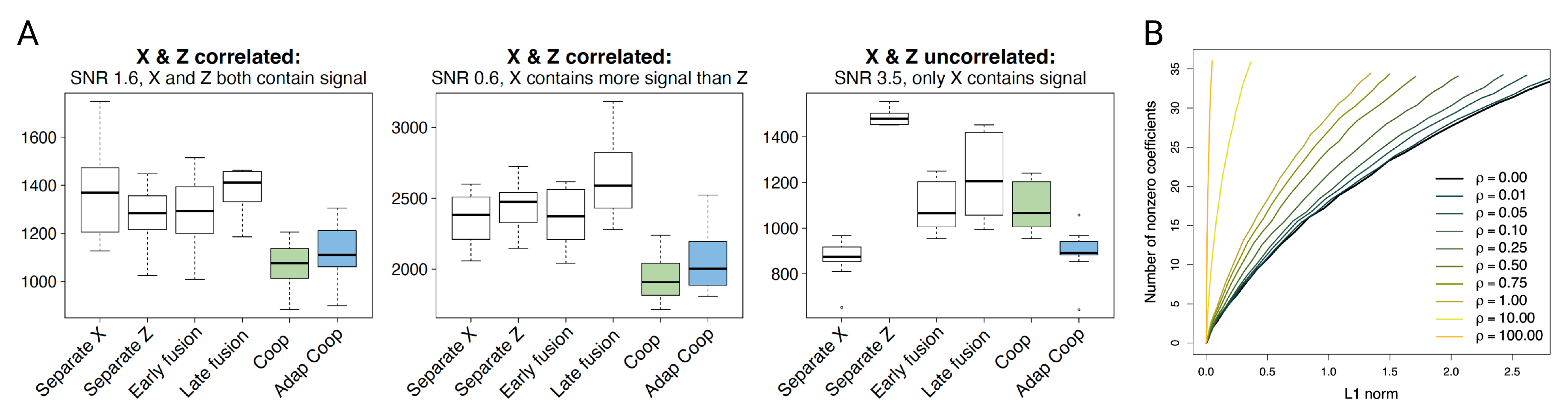}
    \caption{{\em An illustrative simulation study of cooperative learning in the regression setting, and sparsity of the solution.} {\em (A)} Cooperative learning achieves superior prediction accuracy on a test set when the data views $X$ and $Z$ are correlated.
    The y-axis shows the mean squared error (MSE) on a test set.
    The methods in comparison from left to right in each panel correspond to (1) Separate $X$: lasso applied on the data view $X$ only; (2) Separate $Z$: lasso applied on the data view $Z$ only; (3) Early fusion: lasso applied on the concatenated data views of $X$ and $Z$; (4) Late fusion: separate lasso models are fit on $X$ and $Z$ independently and the predictors are then combined through linear least squares; (5) Coop: cooperative learning as outlined in Algorithm \ref{alg:direct_alg}; (6) Adap Coop: adaptive cooperative learning as outlined in Algorithm 
    \ref{alg:adaptive_full} (see Appendix Section \ref{sec:adap_coop}).
    Note that the test MSE in each panel is of a different scale because we experiment with simulating the data of different signal-to-noise ratios (SNR).
    We conducted each simulation experiment 10 times.
    {\em(B)} The number of non-zero coefficients as a function of the $\ell_1$ norm of the solution with different values of the weight on the agreement penalty term $\rho$: the solution becomes less sparse as $\rho$ increases. 
    } 
    \label{fig:illustrative}
\end{figure}

\subsection{ One-at-a-time algorithm for cooperative regularized linear regression}
\label{sec:iterative}
As an alternative, one can optimize \eqref{eq:obj2} by iteratively optimizing over $\tx$ and $\tz$, fixing one and optimizing over the other.  
The updates are as follows:

\begin{eqnarray}
\hat\tx &=& \text{Lasso}(X,\vec{y_x^*},\lambda_x) \text{ , where } \vec{y_x^*}=
\frac{\y}{1+\rho}-\frac{(1-\rho)Z\tz}{(1+\rho)},\cr
\hat\tz &=& \text{Lasso}(Z,\vec{y_z^*},\lambda_z) \text{ , where } \vec{y_z^*}
= \frac{\y}{1+\rho}-\frac{(1-\rho)X\tx}{(1+\rho)}.
\label{eq:oaat}
\end{eqnarray}

This is analogous to the general iterative procedure in (\ref{eq:gen}). It is summarized in Algorithm \ref{alg:one-at-a-time}.

\noindent
\begin{minipage}{.939\textwidth}
\begin{algorithm}[H]
\KwIn{$X \in \mR^{n\times p_x}$ and $Z \in \mR^{n\times p_z}$, the response $\vec{y} \in \mR^{n}$, 
and a grid of hyperparameter values ($\rho_{\tt min}, \ldots, \rho_{\tt max}).$}
\vspace{2mm}
Fix the lasso penalty weights $\lambda_x$ and $\lambda_z$,
\For{$\rho \gets \rho_{\tt min},\hdots, \rho_{\tt max}$}{
Initialize $\tx^{(0)} \in \mR^{p_x}$ and $\tz^{(0)} \in \mR^{p_z}$.

\For{$k \gets 0,1,2,\hdots$ {\normalfont until convergence}}{
\begin{enumerate}
    \item Set $\vec{y_x^*}=\frac{\y}{1+\rho}-\frac{(1-\rho)Z\tz}{(1+\rho)}$. Solve ${\rm Lasso}(X, \vec{y_x^*},\lambda_{x})$
and update $\ttheta_x^{(k+1)}$ to be the solution.
    \item Set $\vec{y_z^*}=\frac{\y}{1+\rho}-\frac{(1-\rho)X\tx}{(1+\rho)}$. Solve ${\rm Lasso}(Z, \vec{y_z^*},\lambda_{z})$ and update $\ttheta_z^{(k+1)}$ to be the solution.
\end{enumerate}
}
}
Select the optimal value of $\rho^{*}$ based on the sum of the CV errors and get the final fit.
\caption{\em One-at-a-time algorithm for cooperative regularized regression.}\label{alg:one-at-a-time}
\end{algorithm}
\end{minipage}

By iterating back and forth between the two lasso problems, we can find the optimal solution to \eqref{eq:obj2}. When both $X$ and $Z$ have full column rank, the problem \eqref{eq:obj2} is strictly convex and each iteration
decreases the overall objective value. Therefore, the one-at-a-time procedure is guaranteed to converge. 
In general, it can be shown to converge to some stationary point, using
results such as those in \cite{tibshirani2017dykstra}.
This algorithm uses  fixed values for  $\lambda_x, \lambda_z$: we need to run the algorithm over  a grid of such values, or use CV to choose $\lambda_x, \lambda_z$
within each iteration.

With just two views, there seems to be no advantage to this approach over the direct solution given in Algorithm \ref{alg:direct_alg}.
However, for a larger number of views, there can be a computational
advantage, which we will discuss in Section \ref{sec:morethan2}.

\subsection{Relation to early/late fusion}
\label{sec:rel}

From the objective functions \eqref{eq:obj1} and  \eqref{eq:obj2},
when the weight on the agreement term $\rho$ is set to 0, cooperative learning (regression) reduces to a form of early fusion:  we simply concatenate the columns of different views and apply lasso or another regularized regression method.

Next we discuss the relation of cooperative learning to late fusion.
Let $X$ and $Z$ have centered columns and $y$ centered, from \eqref{eq:sol2} we obtain 
\begin{equation}
{\tilde X}^T {\tilde X}=
\begin{pmatrix}
X^TX(\1+\rho) & X^TZ (\1-\rho)\\
Z^TX(\1-\rho) &Z^TZ(\1+\rho)\\
\end{pmatrix}.
\label{one}
\end{equation}
Assuming $X$ and $Z$ have full rank, and omitting the $\ell_1$ penalties, we obtain the least squares estimates
\begin{equation}
\begin{pmatrix}
\thx\cr
\thz\cr
\end{pmatrix}
=
\begin{pmatrix}
X^TX(1+\rho) & X^TZ (1-\rho)\\
Z^TX(1-\rho) &Z^TZ(1+\rho)\\
\end{pmatrix}
 ^{-1} 
\begin{pmatrix}
X^T \y\cr
Z^T \y\cr
\end{pmatrix}.
\end{equation}
If  $X^TZ=0$ (uncorrelated features between the views), this reduces to a linear combination of the least squares estimates for each block; when $\rho=1$, it is simply the average of  the least squares estimates for each block.
The above relation also holds when we include the $\ell_1$ penalties.

This calculation  suggests that  restricting  $\rho$ to be in $[0,1]$ would be natural. However, we have found that values larger than one can sometimes yield lower prediction error (see the simulation studies in Section \ref{sec:simulation_all}).

\subsection{Sparsity of the solution}
\label{sec:sparse}


We explore how the sparsity of the solution depends on the agreement hyperparameter $\rho$ in Figure 2{\em B}.
We did 100 simulations of Gaussian data with $n=100$ and $p=20$ in each of two views, with all coefficients equal to 2.0. The standard deviation (SD) of the errors was chosen so that the SNR was about 2.
The figure shows the number of non-zero coefficients as a function of the overall $\ell_1$ of the solutions, for different values of $\rho$. Note that the lasso parameter $\lambda$ is varying along the horizontal axis; we chose to plot against the $\ell_1$ norm, a more meaningful quantity.
We see that the solutions become less sparse as $\rho$ increases, much like the behavior that one sees in the elastic net.

\subsection{Theoretical analysis under the latent factor model}
To understand the role of the agreement penalty from a theoretical perspective, we consider the following latent factor model. Let $\bu = (U_1, U_2, \dots, U_n)$ be a vector of $n$ i.i.d. random variables with $U_i \sim \mathcal{N}(0,1)$, $\by = (y_1, \dots, y_n)$, $\bx = (X_1, \dots, X_n)$, and $\bz = (Z_1, \dots, Z_n)$, with $y_i = \gy U_i + \varepsilon_{yi}$, $X_i = \gx U_i + \varepsilon_{xi} \text{ and } Z_i = \gz U_i + \varepsilon_{zi} $,
where $\varepsilon_{yi} \sim \mathcal{N}\p{0,\sigma_y^2}$, $\varepsilon_{xi} \sim \mathcal{N}\p{0,\sigma_x^2}$, $\varepsilon_{zi} \sim \mathcal{N}\p{0,\sigma_z^2}$ independently. 
We show that the mean squared error (MSE) of the predictions from cooperative learning is a decreasing function of $\rho$ around 0 with high probability (see details in Appendix Section \ref{sec:appendix_theoretical_analysis}).
Therefore, the agreement penalty offers an advantage in reducing MSE of the predictions under the latent factor model.

\subsection{Relation to existing approaches}
\label{sec:related_work}
We have mentioned the close connection of cooperative learning to {\em early and late fusion}: setting $\rho=0$ or 1 gives a version of each of these, respectively.
There are many variations of late fusion, including
the use of stacked generalization to combine the predictions at the last stage \citep{GARCIACEJA201845}.

Cooperative learning is also related to {\em collaborative regression} \citep{gross2015collaborative}.
This method uses an objective function of the form
\begin{equation}
    \frac{b_{xy}}{2}||\y-X\tx||^2+ \frac{b_{zy}}{2}||\y-Z\tz||^2+ \frac{b_{xz}}{2}||X\tx-Z\tz||^2.
\end{equation}
With $\ell_1$ penalties added, this is proposed as a method for sparse supervised canonical correlation analysis.
It is different from cooperative learning in an important way:
here $X$ and $Z$ are not fit jointly to the target. The authors state that 
collaborative regression is not well suited
to the prediction task.
We note that if $b_{xy}=b_{zy}=b_{xz}=1$, each of $\thx, \thz$ are
the one-half of the least squares (LS) estimates on $X, Z$ respectively. Hence the overall prediction $\hat\y$ is the average of the individual LS predictions.
This late fusion estimate is the same as that obtained from cooperative learning with $\rho=1$.
In addition, a related framework based on optimizing measures of agreement between data views was also proposed in \cite{sindhwani2005co}, but it is different from cooperative learning in the sense that the data views are not used jointly to model the target.

Cooperative learning also has connections with {\em contrastive learning} \citep{chen2020simple, khosla2020supervised}. 
This method is an unsupervised learning technique first proposed for learning visual representations.
Without the supervision of $\y$, it learns representations of images by maximizing agreement between differently augmented ``views'' of the same data example.
While both contrastive learning and cooperative learning have a term in the objective that encourages agreement between correlated views, our method combines the agreement term with the usual prediction error loss and is thus supervised.

Moreover, the iteration 
\eqref{eq:gen}
looks much like the {\em backfitting} algorithm
for generalized additive models \citep{hastie1990generalized}. 
In that setting,
 each of $f_X$ and $f_Z$ are typically functions
 of one-dimensional features $X$ and $Z$, and the backfitting algorithm iterations correspond to
\eqref{eq:gen}
with $\rho=0$.
In the additive model setting, backfitting is a special case of the Gauss-Seidel algorithm \citep{hastie1990generalized}.
In cooperative learning, each of $X,Z$ are views with multiple features; we could use an additive model for each view, i.e. $f_X(X)=\sum_i g_i(X_i)$, $f_Z(Z)=\sum_j h_j(Z_j)$, where $i$ and $j$ are column indices of $X$ and $Z$, respectively. Then each of the iterations in 
\eqref{eq:gen}
could be solved using a backfitting algorithm, leading to a nested procedure.

We next discuss the relation of cooperative learning to a recently proposed method for multiview analysis called {\em sparse integrative discriminant analysis (SIDA)} \citep{safo2021sparse}.
This method aims to identify variables that are associated across views while also able  to optimally separate data points into different classes.
Specifically, it combines canonical correlation analysis and linear discriminate analysis by solving the following optimization problem. 
Let $X_{k} = (\vec{x}_{1k},\ldots, \vec{x}_{n_{k},k})^{T} \in \mR^{n_{k} \times p}$, $\vec{x}_{k} \in \mR^{p}$ be the data matrix for class $k$, where $k=1,\ldots,K$, and $n_{k}$ is the number of samples in class $k$.
Then, the mean vector for class k is $\hat{\mu}_{k} = \frac{1}{n_{k}}\sum_{i=1}^{n_{k}} \vec{x}_{ik}$; the common variance matrix for all classes is $S_{w} = \sum_{k=1}^{K}\sum_{i=1}^{n}(\vec{x}_{ik}-\hat{\mu}_k)(\vec{x}_{ik}-\hat{\mu}_k)^{T}$; the between class covariance matrix is $S_{b} = \sum_{k=1}^{K}n_{k}(\hat{\mu}_k - \hat{\mu})(\hat{\mu}_k - \hat{\mu})^{T}$, where $\hat{\mu} = \frac{1}{n}\sum_{k=1}^{K}n_{k}\hat{\mu}_k$ is the combined class mean vector.
Assume that we have two data views $X \in \mR^{n \times p_{x}}$ and $Z \in \mR^{n \times p_{z}}$ with centered columns, we want to find $A = [\vec{a}_{1},\ldots,\vec{a}_{K-1}]$ and $B = [\vec{b}_{1},\ldots,\vec{b}_{K-1}]$ such that
\vspace{-1mm}
\begin{align*}
    &{\rm max} \; \rho \cdot \text{tr}(A^{T}S_{b}^{x}A + B^{T}S_{b}^{z}B) + (1 - \rho)\cdot\text{tr}(A^{T}S_{xz}BB^{T}S^{T}_{xz}A) \nonumber \\
    &\text{s.t. } \text{tr}(A^{T}S_{w}^{x}A) / (K-1) = 1 \text{ \& } \text{tr}(B^{T}S_{w}^{z}B) / (K-1) = 1,
\end{align*}
where $S_{xz} \in \mR^{p_{x} \times p_{z}}$ is the sample cross-covariance matrix between $X$ and $Z$. Here, tr($\cdot$) is the trace function, and $\rho$ is the parameter that controls the relative importance of the ``separation'' term and the ``association'' terms in the objective.
While SIDA also considers the association across data views by choosing vectors that are associated and able to separate data points into classes, it solves the problem in a ``backward'' manner, that is the features are modeled as a function of the outcome.
Cooperative learning, in contrast, solves the problem in a ``forward'' manner ($Y \sim X,Z)$, which is more suitable for
prediction.

We also note the connection between cooperative learning (regression) with the  {\em standardized group lasso} \citep{simon2012standardization}.
This method is a variation of the group lasso \citep{yuan2006model}, and uses 
\begin{equation}
    \|X\tx\|_{2} + \|Z\tz\|_{2}\label{eq:sgl}
\end{equation} 
as the penalty term, rather than the sum of squared two norms. 
It encourages group-level sparsity by eliminating entire blocks of features at a time.
In the group lasso,  each block is a group of features, and we do not expect each block to be predictive on its own. 
This is different from cooperative learning, where each feature block is a data view and we generally do not want to eliminate an entire view for prediction. In addition, the standardized group lasso
does not have an agreement penalty. 
One could in fact add the
standardized group lasso penalty (\ref{eq:sgl}) to the cooperative learning objective, which would allow elimination of entire data views.

\section{Cooperative learning with more than two data views}
\label{sec:morethan2}
When we have more than two views of the data, $X_1 \in
\mR^{n\times p_1},X_2\in
\mR^{n\times p_2},\ldots, X_M\in
\mR^{n\times p_M}$, the population quantity that we want to minimize becomes 
\begin{equation}
{\rm min} \;  
{\rm E}\Bigl[\frac{1}{2}(\y-\sum_{m=1}^{M} f_{X_{m}}(X_{m}))^2 + \frac{\rho}{2}\sum_{m<m'}(f_{X_{m}}(X_{m})-f_{X_{m'}}(X_{m'}))^2\Bigr].
\label{eq:generalK}
\end{equation}
We can also have different weights on the agreement penalties for distinct pairs of data views, forcing some pairs to agree more than others. In addition, we can incorporate prior knowledge in determining the relative strength of the agreement penalty for each pair of data views.


As with two views, this can be optimized with an iterative algorithm that updates each $f_{X_m}(X_m)$ as follows:
\begin{equation}
f_{X_m}(X_m)={\rm E} \Bigl[\frac{\y}{1+(M-1)\rho} -\frac{(1-\rho)\sum_{m'\neq m} f_{X_{m'}(X_{m'})}}{1+(M-1)\rho}|X_m\Bigr].
\label{eq:gen2}
\end{equation}
As in the two-view setup above, the fitter $E(\cdot | X_m)$ can be tailored to the data type of each view.

For regularized linear regression with more than two views,  the objective becomes
\begin{equation}
J(\ttheta_1,\ttheta_2, \ldots, \ttheta_M) = 
\frac{1}{2} ||\y- \sum_{m=1}^{M} X_m\ttheta_m||^2+  \frac{\rho}{2}\sum_{m<m'} ||(X_m\ttheta_m- X_{m'}\ttheta_{m'})||^2 + \sum_{m=1}^{M} \lambda_m \|\ttheta_m||_1.
\label{eq:objK}
\end{equation}

This is again a convex problem.
The optimal solution can be found by forming augmented data matrices as before in \eqref{eq:sol2} and \eqref{eq:sol_full}.

Let \begin{equation*}
\tilde X=
\begin{pmatrix}
 X_{1}   &  X_{2} & ...& X_{M-1} & X_{M}\\
-\sqrt{\rho}X_{1} &  \sqrt{\rho}X_{2} & ... & 0 & 0 \\
-\sqrt{\rho}X_{1} &  0 & ... & \sqrt{\rho}X_{M-1} & 0 \\
-\sqrt{\rho}X_{1} &  0 & ... & 0 & \sqrt{\rho}X_{M} \\
0 &  -\sqrt{\rho}X_{2}  & ... & \sqrt{\rho}X_{M-1}& 0 \\
0 &  -\sqrt{\rho}X_{2} & ... & 0 & \sqrt{\rho}X_{M} \\
... & ... & ...& ...&...\\
0 & 0 & ... & -\sqrt{\rho}X_{M-1} & \sqrt{\rho}X_{M} \\
\end{pmatrix}, 
\end{equation*}

\begin{equation}
\tilde\y=
\begin{pmatrix} 
\y&\vec{0}&...&\vec{0} 
\end{pmatrix}^T, \;
\tilde\bbeta=
\begin{pmatrix}
\ttheta_1&\ttheta_2&...&\ttheta_M  
\end{pmatrix}^T,
\label{eq:solK_full_matrix}
\end{equation}
then the equivalent problem to \eqref{eq:objK} becomes 
\begin{equation} 
\frac12||\tilde \y-\tilde X\tilde \bbeta||^2+
\sum_{m=1}^{M} \lambda_m \|\ttheta_m||_1. 
\label{eq:solK_full}
\end{equation}

With $M$ views, the augmented matrix in \eqref{eq:solK_full_matrix} has $n+{M\choose 2}\cdot n$ rows, which could be computationally challenging to solve.
Alternatively, 
the optimal solution $\hat{\ttheta_1},\hat{\ttheta_2}, \ldots, \hat{\ttheta_M}$  has fixed points
\begin{eqnarray}
\hat\ttheta_m &=& \text{Lasso}(X,\vec{y_m^*},\lambda_{m}) \text{ , where } \vec{y_m^*}
= \frac{\y}{1+(M-1)\rho}-\frac{(1-\rho)\sum_{m'\neq m} X_{m'}\ttheta_{m'}}{1+(M-1)\rho}.
\label{eq:solK}
\end{eqnarray}

This leads to an iterative algorithm, where we successively solve each subproblem, until convergence. For a large number of views, this can be a more efficient procedure than the direct approach in \eqref{eq:solK_full} above.
We include simulation studies on cooperative learning for more than two views in Appendix Section \ref{sec:appendix_more_than_two_views}.

\section{Simulation studies} 
\label{sec:simulation_all}

\subsection{Simulation studies on cooperative regularized linear regression}
\label{sec:simulation}
Here we compare cooperative learning in the regression setting with early and late fusion methods in simulations. 
The set up is as follows.
Given values for parameters $n, p_{x}, p_{z}, p_{u}, s_{u}, t_{x}, t_{z}, \bbeta_{u}, \sigma$, we generate data according to the following procedure: 
\begin{enumerate}[noitemsep]
    \item $x_j \in \mR^{n}$ distributed i.i.d. MVN$(0, I_{n})$ for $j = 1,2,\ldots,p_{x}$.
    \item $z_j \in \mR^{n}$ distributed i.i.d. MVN$(0, I_{n})$ for $j = 1,2,\ldots,p_{z}$.
    \item For $i = 1, 2, \ldots, p_{u}$ ($p_{u}$ corresponds to the number of latent factors, $p_{u} < p_{x}$ and $p_{u} < p_{z}$):
    \begin{enumerate}
        \item $u_{i} \in \mR^{n}$ distributed i.i.d. MVN$(0, s_{u}^2I_{n})$;
        \item $x_{i} = x_{i} + t_{x} * u_{i}$;
        \item $z_{i} = z_{i} + t_{z} * u_{i}$.
    \end{enumerate}
    \item $X=[x_{1}, x_{2}, \ldots, x_{p_{x}}]$, $Z = [z_{1}, z_{2}, \ldots, z_{p_{z}}]$.
    \item $U = [u_{1}, u_{2}, \ldots, u_{p_{u}}]$, $\y = U\bbeta_{u} + \epsilon$ where $\epsilon \in \mR^{n}$ distributed i.i.d. MVN$(0, \sigma^2 I_{n})$.
\end{enumerate}

\begin{figure}[h!]  
\centering
\includegraphics[width=1\textwidth]{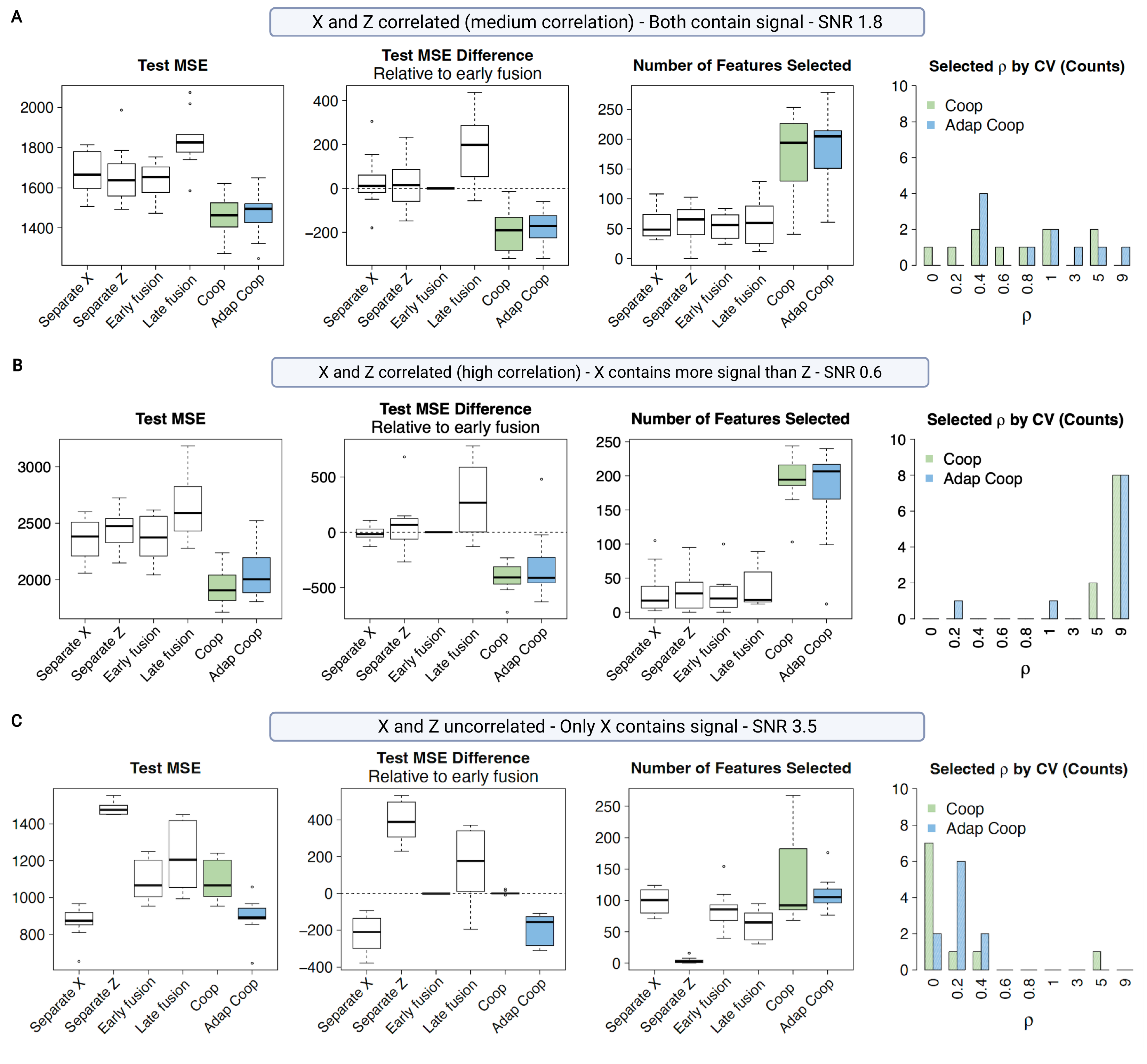}
\caption{{\em Simulation studies on cooperative regularized linear regression.} {\em (A)} Simulation results when $X$ and $Z$ have a medium level of correlation and both contain signal ($t_{x} = t_{z} = 2$), $n = 200, p = 1000$, SNR $= 1.8$. The first panel shows MSE on a test set; the second panel shows the MSE difference on the test set relative to early fusion; the third panel shows the number of features selected; the fourth panel shows the $\rho$ values selected by CV in cooperative 
learning. Here ``Coop'' refers to cooperative 
learning outlined in Algorithm \ref{alg:direct_alg} and ``Adap Coop'' refers to adaptive cooperative learning outlined in Algorithm \ref{alg:adaptive_full} (see Appendix Section \ref{sec:adap_coop}). 
{\em (B)} Simulation results when $X$ and $Z$ have a high level of correlation and X contains more signal than Z ($t_{x} = 6, t_{z} = 1$), $n = 200, p = 1000$, SNR $= 0.6$.
{\em (C)}  Simulation results when $X$ and $Z$ have no correlation; only $X$ contains signal ($t_{x} = 2, t_{z} = 0$), $n = 200, p = 1000$, SNR $= 3.5$.
} \label{fig:res_1}
\end{figure}

There is sparsity in the solution since a subset of columns of $X$ and $Z$ are independent of the latent factors used to generate $\y$.
Data sets are simulated with different levels of correlation between the two data views $X$ and $Z$, different contributions of $X$ and $Z$ to the signal, and different signal-to-noise ratios (SNR).
We consider the settings of both small $p$ and large $p$ regimes, and of both low and high SNR ratios.
We use 10-fold CV to select the optimal values of hyperparameters.
We compare the following methods: 
\begin{itemize}
    \item Separate $X$ and separate $Z$: The standard lasso is applied on the separate data views of $X$ and $Z$ with 10-fold CV. 
    \item Early fusion: The standard lasso is applied on the concatenated data views of $X$ and $Z$ with 10-fold CV. Note that this is equivalent to cooperative learning with $\rho = 0$.
    \item Late fusion: Separate lasso models are first fitted on $X$ and $Z$ independently with 10-fold CV, and the two resulting predictors are then combined through linear least squares for the final prediction.
    \item Cooperative learning (regression) and adaptive cooperative learning.
\end{itemize}

We evaluated the performance based on the mean-squared error (MSE) on a test set.
We conducted each simulation experiment 10 times.

Overall, the simulation results can be summarized as follows:
\begin{itemize}
\item Cooperative learning performs the best in terms of test MSE across the range of SNR and correlation settings.
It is most helpful when the data views are correlated and both contain signal (as in Figure 3{\em A} and Figure 3{\em B}). 
When the correlation between data views is higher, higher values of $\rho$ are more likely to be selected.


\item When only one view contains signal and the views are not correlated (as in Figure 3{\em C}), cooperative learning is outperformed by the separate model fit on the view containing the signal, but adaptive cooperative learning is able to perform on par with the separate model, outperforming early and late fusion.

\item Moreover, we also find that cooperative learning tends to yield a less sparse model, as expected from the results of Section \ref{sec:sparse}.
\end{itemize}
We include more comprehensive results across a wider range of simulation settings in Section \ref{sec:appendix_sim} in the Appendix.

\subsection{Simulation studies on cooperative learning with imaging and ``omics" data}
\label{sec:sim_imaging}
Here we extend the simulation studies for cooperative learning to the setting where we have two data views of more distinct data modalities, such as imaging and omics data (e.g. transcriptomics and proteomics). 
We tailor the fitter suitable to each view, i.e. convolutional neural networks (CNN) for images and lasso for omics.
We simulate the ``omics'' data ($X$) and the ``imaging'' data ($Z$) such that they share some common factors. 
These factors are also used to generate the signal in the response $\y$.
We use a factor model to generate the data, as it is a natural way to create correlations between $X, Z,$ and $\y$.
In Appendix Section \ref{sec:appendix_imaging}, we outline the full details of the simulation procedure.
Figure 4 shows some examples of the synthetic images generated for this study.

\begin{figure}[h]  
\centering
\includegraphics[width=0.5\textwidth]{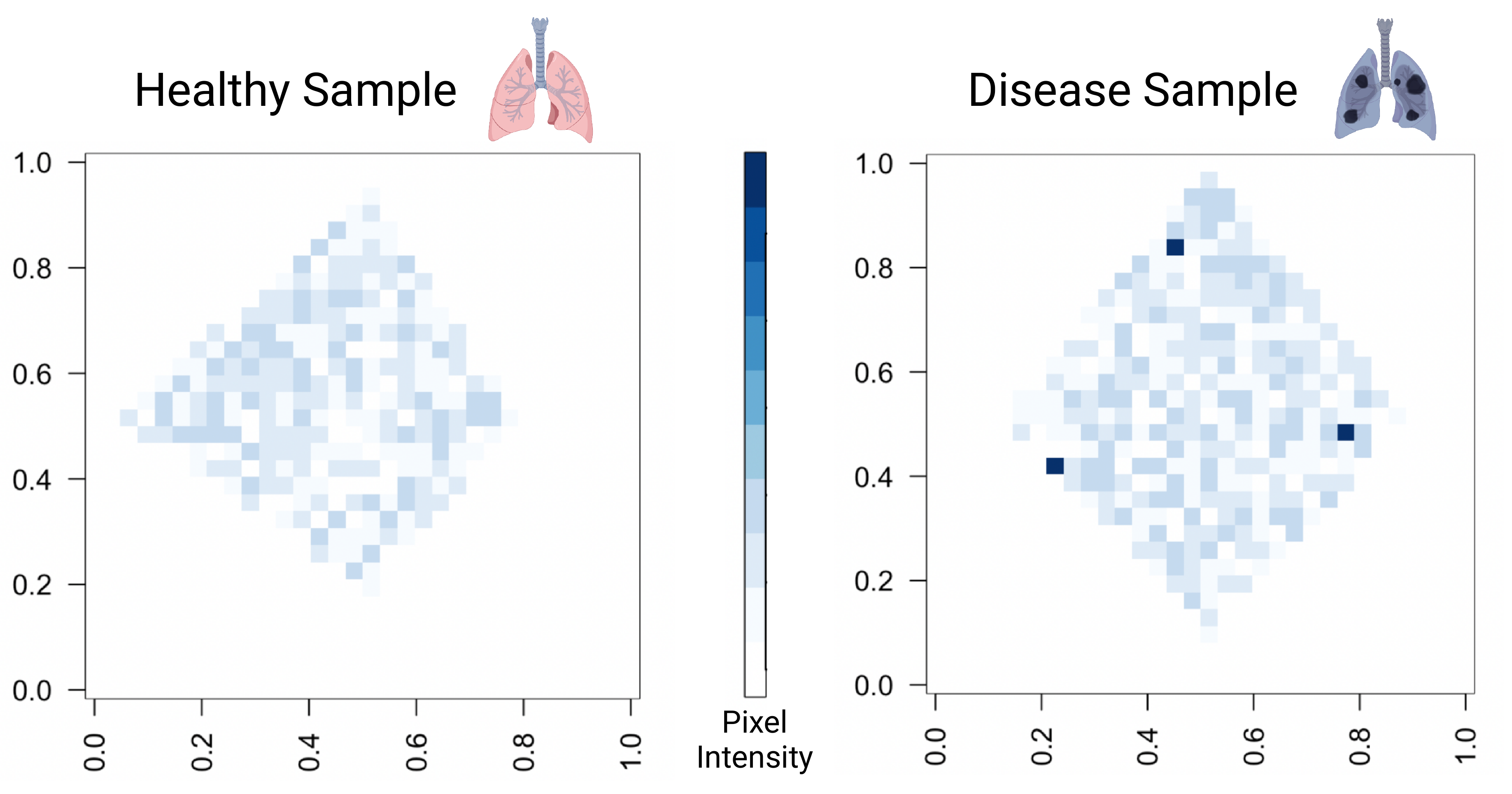}
\caption{{\em Generated images for ``healthy'' and ``disease'' samples.} One can think of the image as an abstract form of a patient's lung, with the darker spots corresponding to the tumor sites. The intensity of the dark spots on the disease samples is generated to correlate with the omics data and the signal in the outcome.}
\label{fig:images}
\end{figure}

Our task is to use the omics and imaging data to predict if a patient has a certain disease.
We use CNN for modeling the imaging data and lasso for the omics data, and optimize the objective for the general form of cooperative learning as in Algorithm \ref{eq:zero0} with the iterative ``one-at-a-time'' algorithm outlined in Algorithm \ref{eq:gen}.

We compare cooperative learning to the following methods: (1) Only images: a simple one-layer CNN with max pooling and rectified linear unit (ReLU) activation is applied on the imaging data only; (2) Only omics: the standard lasso is applied on the omics data only; (3) Late fusion: separate models (CNN and lasso) are first fit on the imaging and omics data, respectively, and the resulting predictors are then combined through linear least squares using a validation set. 
We evaluated the performance based on the misclassification error on a test set, as well as the difference in misclassification error relative to late fusion\footnote{Early fusion is not applicable in this setting.}. 
We consider both low and high SNR settings\footnote{The SNR is calculated based on the logits of the probabilities used to generate the class labels.}.
We conducted each simulation experiment 10 times.

The results are shown in Figure \ref{fig:omics}.
We find that (1) late fusion achieves a lower misclassification error on the test set than the separate models; (2) cooperative learning outperforms late fusion and achieves the lowest test error by encouraging the predictions from the two views to agree; (3) cooperative learning is especially helpful when the SNR is low, while its benefit is less pronounced when the SNR is higher. 
The last observation makes sense, because when the SNR is lower the marginal benefit of leveraging the other view(s) in strengthening signal becomes larger.


\section{Real multiomics studies} 
\label{sec:real}

We applied cooperative learning (regression) to a data set of labor onset, collected from a cohort of women who went into labor spontaneously, as described in \cite{stelzer2021integrated}.
Proteome and metabolome were measured from blood samples collected from the patients during the last 120 days of pregnancy. 
The goal of the analysis is to predict time to spontaneous labor using proteomics and metabolomics data.

\begin{figure*}[h!]
\centering
\includegraphics[width=1\textwidth]{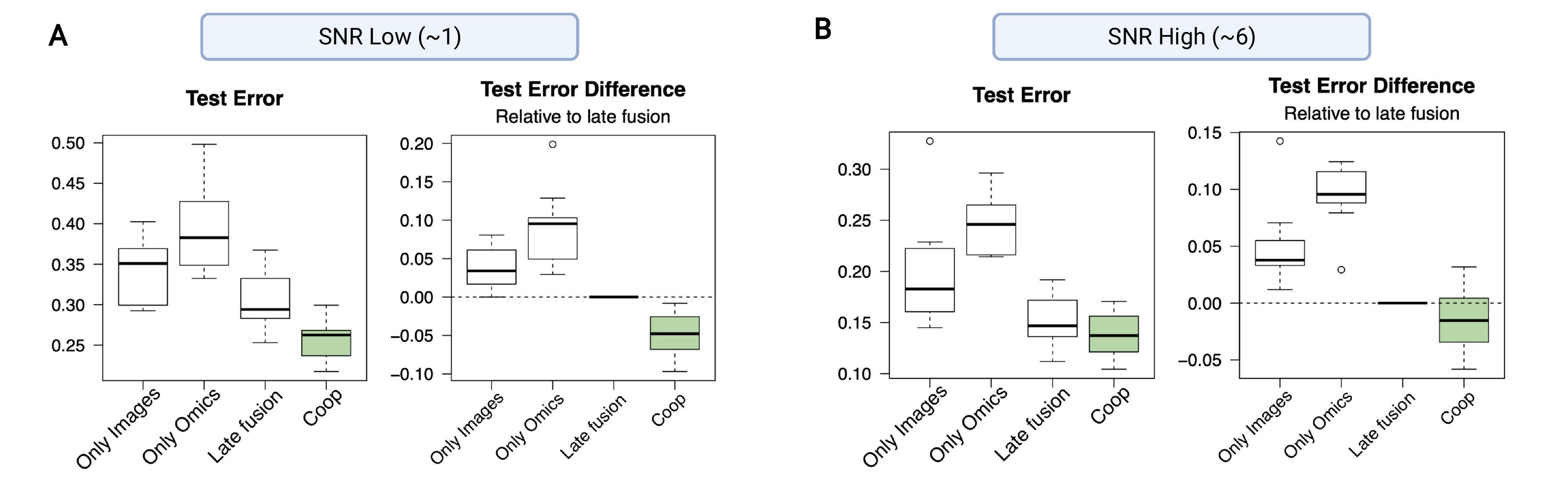}
\caption{{\em Simulation studies on cooperative learning with imaging and ``omics'' data}. Panel {\em (A)} corresponds to the relatively low SNR setting (SNR = 1) and panel {\em (B)} to the higher SNR setting (SNR = 6). For each setting, the left panel shows the misclassification error on the test set for CNN on only images, lasso on only omics, late fusion, and cooperative learning; the right panel shows the difference in misclassification error relative to late fusion. Here ``Coop'' refers to cooperative learning. For both settings, the range of $\rho$ values for cooperative learning to select from is (0,20). The average $\rho$ selected in the low SNR setting is 6.8 and in the high SNR setting is 8.0. 
}
\label{fig:omics}
\end{figure*}

The proteomics data contained measurements for 1,322 proteins and the metabolomics data contained measurements for 3,529 metabolites.
We split the dataset of 53 patients into training and test sets of 40 and 13 patients, respectively\footnote{The cohort consisted of 63 patients as described in \cite{stelzer2021integrated}, but in the public dataset we only found 53 patients with matched proteomics and metabolomics data.}. 
Both the proteomics and metabolomics measurements were screened by their variance across the subjects.
We extracted the first time point for each patient from the longitudinal study and predicted the corresponding time to labor.
We conducted the same set of experiments across 10 different random splits of the training and test sets.

\begin{table*}[!b]
    \small
    \begin{center}
    \vspace{-1mm}
    \begin{tabular}{c|M{1.6cm}M{1.6cm}|M{1.6cm}M{1.6cm}|M{3.2cm}}
        \toprule
        \textbf{Methods} & \multicolumn{2}{c|}{\textbf{Test MSE}} & \multicolumn{2}{c|}{\textbf{Relative to Early Fusion}} & \textbf{Number of Features Selected} \\
        & Mean & SD & Mean & SD & Mean \\ \hline
        \midrule
        Separate Proteomics & 475.51 & 80.89 & 69.14 & 81.44 & 26 \\ 
        Separate Metabolomics & 381.13 & 36.88 & -25.24 & 30.91 & 11 \\
        Early fusion & 406.37 & 44.77 & 0 & 0 & 15 \\
        Late fusion & 493.34 & 63.44 & 86.97 & 68.13 & 21\\
        \textbf{Cooperative learning} & \textbf{335.84} & \textbf{38.51}  &  \textbf{-70.53} & \textbf{32.60} & 52 \\
        \bottomrule
    \end{tabular}
    \end{center}
    \caption{{\em Multiomics studies on labor onset prediction.} The first two columns in the table show the mean and standard deviation (SD) of MSE on the test set across different splits of the training and test sets; the third and fourth column show the MSE difference relative to early fusion; the last column shows the average number of features selected. The methods include (1) separate proteomics: the standard lasso is applied on the proteomics data only; (2) separate metabolomics: the standard lasso is applied on the metabolomics data only; (3) early fusion: the standard lasso is applied on the concatenated data of proteomics and metabolomics data; (4) late fusion: separate lasso models are first fit on proteomics and metabolomics independently and the predictors are then combined through linear least squares; (5) cooperative learning (Algorithm \ref{alg:direct_alg}).  The average of the selected $\rho$ values is 0.9 for cooperative learning.}
    \label{tab:results_labor_onset}
\end{table*}

The results are shown in Table \ref{tab:results_labor_onset}.
The model fit on the metabolomics data achieves lower test MSE than the one fit on the proteomics data.
Early and late fusion hurt performance as compared to the model fit on only metabolomics.
Cooperative learning gives performance gains over the model fit only on metabolomics, outperforming both early and late fusion and achieving the lowest MSE on the test set.

We examined the selected features from cooperative learning and the other methods by comparing the ranking of the features based on the magnitude of their coefficients. 
All methods rank sialic acid binding immunoglobulin like lectin-6 (Siglec-6), a protein highly expressed by the placenta \citep{brinkman2007human}, as the most important feature for predicting labor onset.
As compared to the other methods, cooperative learning boosts up the ranking of features such as plexin-B2 (PLXNB2), which is a protein expressed by the fetal membranes \citep{singh2015endometrial}, and Activin-A, which is highly expressed by the placenta as well \citep{stelzer2021integrated}.
While factors such as Siglec-6, PLXNB2 and Activin-A have previously also been discovered by \cite{stelzer2021integrated} for labor onset prediction, C1q was only identified by cooperative learning as one of the top ten features.
C1q is an important factor involved in the complement cascade, which influences implantation and fetal development \citep{girardi2020essential}, and worth further investigation for its role in predicting labor onset.

\section{Cooperative generalized linear models and Cox regression}
\label{sec:glm_cox}
We next describe how
cooperative learning can be extended to generalized linear models (GLMs) \citep{nelder1972generalized} and Cox proportional hazards models \citep{cox1972regression}.

Consider a GLM, consisting  of 3 components: (1) a linear predictor: $\eta = X\bbeta$; (2) a link function $g$ such that ${\rm E}(Y | X) = g^{-1}(\eta)$; (3) a variance function as a function of the mean: $V = V({\rm E}(Y|X))$. For cooperative GLMs, we have the linear predictor as $\eta = X\tx + Z\tz$, and an additional agreement penalty term $\rho||(X\tx- Z\tz)||^2$ with the following objective to be minimized: 
\vspace{-2mm}

\begin{equation}
    J(\tx, \tz)=
\ell(X\tx + Z\tz, \vec{y})+\frac{\rho}{2}||(X\tx- Z\tz)||^2 +\lambda_{x} ||\tx||_1 + \lambda_{z}||\tz||_1,
\label{eq:glm}
\end{equation} 
where $\ell$ is the negative log likelihood (NLL) of the data.
For Cox proportional hazards models, $\ell$ becomes the negative log partial likelihood of the data. 

We make the usual quadratic approximation to \eqref{eq:glm}, reducing the minimization problem to a weighted least squares (WLS) problem, which yields
\begin{equation}
{\rm min} \;  \frac{1}{2} [ ||W(\z-X\tx- Z\tz)||^2+ \rho||(X\tx- Z\tz)||^2] + \lambda_x ||\tx||_1+ \lambda_z ||\tz||_1,
\label{eq:wls}
\end{equation}
where $\z$ is the adjusted dependent variable and $W$ is the diagonal weight matrix, both of which are functions of $\tx$ and $\tz$.

This leads to an iteratively reweighted least squares (IRLS) algorithm:
\begin{itemize}
    \item Outer loop: Update the quadratic approximation using the current parameter $\hat{\ttheta_x}$ and $\hat{\ttheta_z}$, i.e. update the working response $\vec{z}$ and the weight matrix $W$.
    \item Inner loop: Letting 
\begin{equation}
\tilde X=
\begin{pmatrix}
 W^{1/2}X   & W^{1/2}Z\\
-\sqrt{\rho}X &  \sqrt{\rho}Z
\end{pmatrix},  
\tilde{\vec{z}}=
\begin{pmatrix}
W^{1/2}\vec{z} \\
\vec{0}
\end{pmatrix}, 
\tilde \bbeta=
\begin{pmatrix}
\tx \\ \tz  
\end{pmatrix},
\label{eq:sol_irls}
\end{equation}
solve the following problem
\begin{equation} 
J(\tx, \tz)=\frac12||\tilde{\vec{z}}-\tilde X\tilde \bbeta||^2+\lambda_x ||\tx||_1+ \lambda_z ||\tz||_1,
\label{eq:obj_irls}
\end{equation}
 which is equivalent to \eqref{eq:wls}.
\end{itemize}

\section{Some extensions}
\label{sec:extensions}
\subsection{Paired features from different views}
\label{sec:paired}
One can extend cooperative learning to  the setting where 
a feature in one view is naturally paired with a feature in another view. For example, if the $j$th column $X_j$ of $X$ is the gene expression for gene $j$, and $Z_{k}$ is the
expression of the protein $k$ for which gene $j$ codes. In that setup, we would like to encourage agreement between $X_j\ttheta_{xj}$ and $Z_{k}\ttheta_{zk}$.
This pairing need not exist for all features, but can occur for a subset of features.

Looking back at our objective function \eqref{eq:obj2} for two views in the linear case, we add to this objective a pairwise agreement penalty of the form
\begin{equation}
    \rho_2\sum_{j,k \in P} (X_j\ttheta_{xj}-Z_{k}\ttheta_{zk})^2
\end{equation}
where $P$ is the set of indices of the paired features.

This additional penalty can be handled easily in the optimization framework. For the direct algorithm (Algorithm
\ref{alg:direct_alg}), 
we simply add a new row to $\tilde X$ and $\tilde \y$ for each pairwise constraint, while the one-at-a-time algorithm (Algorithm \ref{alg:one-at-a-time})
can be similarly modified.

\subsection{Modeling interactions between views}
\label{sec:interactions}
In our general objective function \eqref{eq:zero0}, we can capture interactions between features in the same view, by using methods such as random forests or boosting for the learners $f_X $ and $f_Z$.
However, this framework does not allow for interactions 
 between features in {\em different} views. Here is an  objective function to facilitate such interactions:
\begin{equation}
    {\rm min} \;  {\rm E}\Bigl[\frac{1}{2}  (\y-f_X(X)-f_Z(Z) -f_{XZ}(X,Z))^2 + \frac{\rho}{2}(f_X(X)-f_Z(Z))^2 +\frac{\rho}{2(1-\rho)} f^2_{XZ}(X,Z)\Bigr],
\label{eq:zero2}
\end{equation}
where $f_{XZ}(X,Z)$ is a joint function of $X$ and $Z$, including for example, interactions between the features in each view.

The solution to \eqref{eq:zero2} has fixed points:
\begin{eqnarray}
f_X(X)&=&{\rm E} \Bigl[\frac{\y}{1+\rho}-\frac{(1-\rho)f_Z(Z)}{(1+\rho)} -\frac{f_{XZ}(X,Z)}{1+\rho} |X\Bigr],\cr
f_Z(Z)&=&{\rm E} \Bigr[\frac{\y}{1+\rho}-\frac{(1-\rho)f_X(X)}{(1+\rho)} - \frac{f_{XZ}(X,Z)}{1+\rho} |Z\Bigr],\cr
f_{XZ}(X,Z)&=&{\rm E}  \Bigr[(1-\rho)(\y-f_X(X)-f_Z(Z) )| X,Z\Bigr].
\end{eqnarray}
When $\rho=0$,  from \eqref{eq:zero2} the solution reduces to the additive model $f_X(X)+f_Z(Z)+f_{XZ}(X,Z)$. As $\rho \rightarrow 1$,  the joint 
term $f_{XY} \rightarrow 0$ and we again get the late fusion estimate as the average of the marginal predictions $\hat f_X(X)$ and $\hat f_Z(Z)$.
To implement this in practice, we simply insert learners such as random forest or boosting for $f_X, f_Z$ and $f_{XZ}$.
The first two use only features from $X$ and $Z$, while the last uses features from both.

\section{Discussion}
\label{sec:discussion}
In this paper, we introduce a new method called cooperative learning for supervised learning with multiple set of features, or ``data views''.
The method encourages the predictions from different data views to align through an agreement penalty.
By varying the weight of the agreement penalty in the objective, we obtain a spectrum of solutions that include the commonly-used early and late fusion methods.
The method can choose the degree of agreement (or fusion) in an data-adaptive manner.
Cooperative learning provides a powerful tool for multiomics data fusion by strengthening aligned signals across modalities and allowing flexible fitting mechanisms for different modalities.
The effectiveness of our methodology has implications for improving diagnostics and therapeutics in an increasingly multiomic world. 

Furthermore, cooperative learning could be extended to the semi-supervised setting when we have additional matched data views on samples that are unlabeled.
The agreement penalty allows us to leverage the signals in the matched unlabeled samples to our advantage.
In addition, when we have missing values in some data views, the agreement penalty also allows us to impute one view from the other(s).
Lastly, the method can be easily extended to binary, count and survival data. 
An open-source R language package for cooperative learning called  \href{https://cran.r-project.org/web/packages/multiview/index.html}{{\tt multiview}} is available on the CRAN repository.

\medskip

{\bf Acknowledgments}. We would like to thank  Olivier Gevaert, Trevor Hastie, Ryan Tibshirani, and Samson Mataraso for helpful discussions, and two referees
whose comments greatly improved this manuscript.
D.Y.D was supported by the Stanford Graduate Fellowship (SGF).
B.N. was supported by Stanford Clinical \& Translational Science Award grant 5UL1TR003142-02 from the NIH National Center for Advancing Translational Sciences (NCATS).
R.T. was supported by the National
Institutes of Health (5R01 EB001988-16) and the National Science Foundation (19 DMS1208164).

\bibliography{reference} 
\bibliographystyle{agsm}
\newpage
\begin{appendix}

\section{Adaptive cooperative learning}
\label{sec:adap_coop}

In this section, we outline an adaptive strategy for optimizing over $\lambda_x$ and $\lambda_z$ for different data views. 
We call this {\em adaptive cooperative learning}.
The method incorporates the values of $\lambda_{x}$ and $\lambda_z$ that have been adaptively optimized by the one-at-a-time algorithm (Algorithm \ref{alg:adaptive_iterative}) as a penalty factor in the direct algorithm (Algorithm \ref{alg:adaptive_full}).
In the two-dimensional grid of $\lambda_{x}$ and $\lambda_{z}$, our proposed strategy works by iteratively searching along one axis of $\lambda$ while fixing the other constant.
\vspace{6mm}

\begin{algorithm}[H]
\KwIn{$X \in \mR^{n\times p_x}$ and $Z \in
\mR^{n\times p_z}$, the response $\vec{y} \in \mR^{n}$, and a fixed hyperparameter $\rho \in \mR$.}
\KwOut{$\hat{\tx}$ and $\hat{\tz}$ from the last iteration, along with the hyperparameters $\lambda_x^{*}$ and $\lambda_z^{*}$ and the corresponding CV errors.}
    \begin{enumerate}
        \item Initialize $\ttheta_x^{(0)} \in \mR^{p_x}$ and $\ttheta_z^{(0)} \in \mR^{p_z}$.
        \item For $k = 0, 1, 2, \ldots$ until convergence:
        \begin{enumerate}
            \item Set $\vec{y_x^*}=
\frac{\y}{1+\rho}-\frac{(1-\rho)Z\tz}{(1+\rho)}$. Solve  ${\rm Lasso}(X, \y_x^*,\lambda)$ over a decreasing grid of $\lambda$ values. Update $\ttheta_x^{(k+1)}$ to be the solution and record the hyperparameter $\lambda_{x}^{*}$ that minimizes the CV error.
            \item Set $\vec{y_z^*}=
\frac{\y}{1+\rho}-\frac{(1-\rho)X\tx}{(1+\rho)}$. Solve  ${\rm Lasso}(Z, \y_z^*,\lambda)$ over a decreasing grid of $\lambda$ values. Update $\ttheta_z^{(k+1)}$ to be the solution and record the hyperparameter $\lambda_{z}^{*}$ that minimizes the CV error.
        \end{enumerate}
    \end{enumerate}
\caption{\em One-at-a-time algorithm for adaptive cooperative learning (regression).}
\label{alg:adaptive_iterative}
\end{algorithm}

\begin{algorithm}[H]
\KwIn{$X \in \mR^{n\times p_x}$ and $Z \in
\mR^{n\times p_z}$, the response $\vec{y} \in \mR^{n}$, and a grid of hyperparameter values ($\rho_{\tt min}, \ldots, \rho_{\tt max}).$}
\vspace{2mm}

\For{$\rho \gets \rho_{\tt min},\hdots, \rho_{\tt max}$}{
        Run {\tt Algorithm \ref{alg:adaptive_iterative}} with both (X,Z) and (Z,X) with the same folds for CV. Select the one with the lower sum of the two CV errors. Get the corresponding $\lambda_x^{*}$ and $\lambda_z^{*}$.\\
        
        Set \begin{align*} \tilde X=
\begin{pmatrix}
 X   & Z\\
-\sqrt{\rho}X &  \sqrt{\rho}Z
\end{pmatrix}, \tilde \y=
\begin{pmatrix}
\y \\
\vec{0}
\end{pmatrix}.
\end{align*}
Solve ${\rm Lasso}(\tilde X, \tilde \y, \lambda)$ over a decreasing grid of $\lambda$ values, with a penalty factor of 
$(1, \ldots, 1, \frac{\lambda_z^{*}}{\lambda_x^{*}},  \ldots, \frac{\lambda_z^{*}}{\lambda_x^{*}})$. Note that we form folds from the rows of X and Z and then construct the corresponding $\tilde X$. 
}

Select the optimal value of $\rho$ based on the CV error and get the final fit.

\caption{\em Direct algorithm for adaptive cooperative learning (regression).}\label{alg:adaptive_full}
\end{algorithm}

\section{More comprehensive simulation studies on cooperative regularized regression}
\label{sec:appendix_sim}

\subsection{More simulation results of the high-dimensional settings ($p=1000, n=200$)}

\begin{figure}[h!]  
\includegraphics[width=1\textwidth]{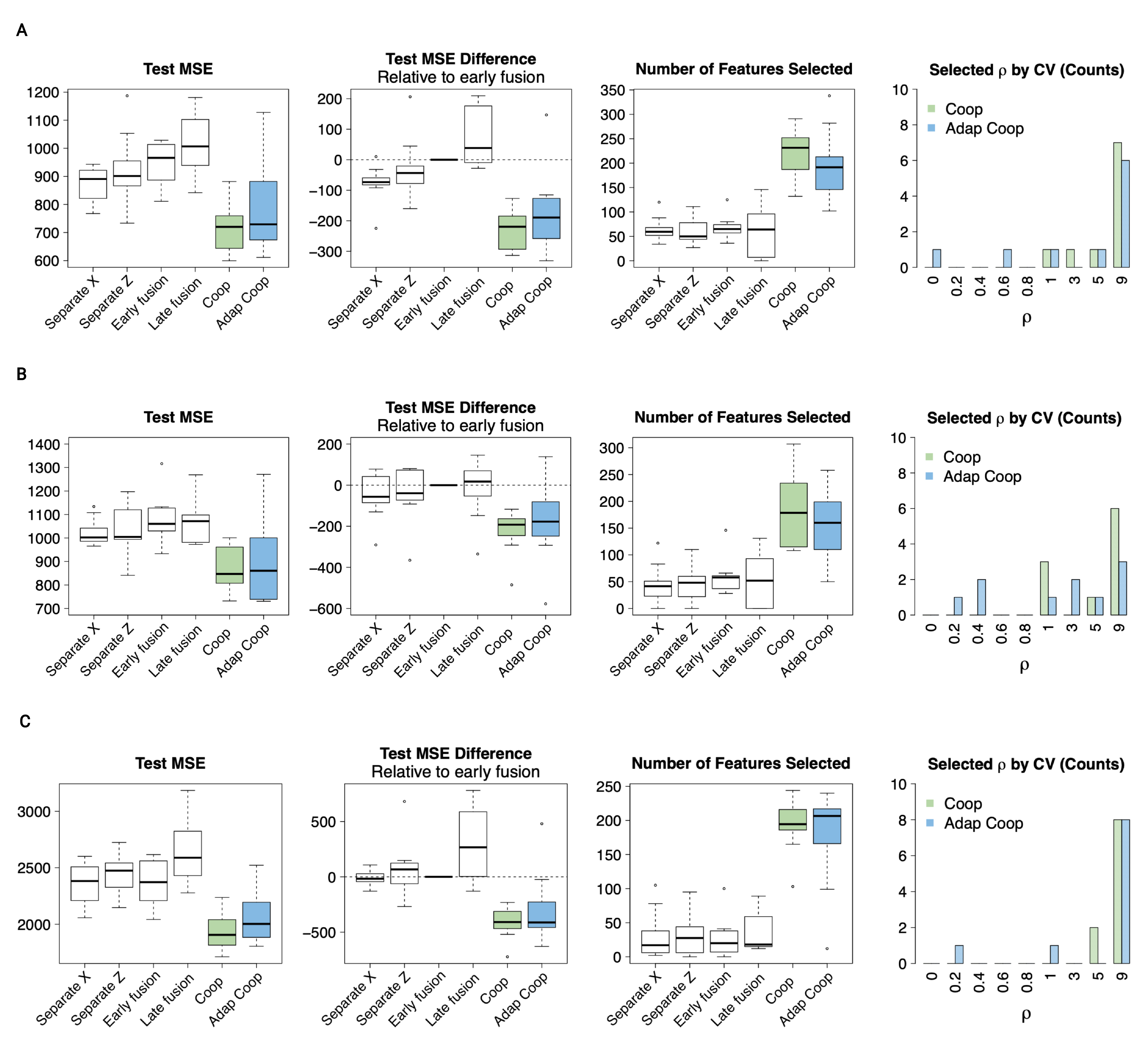}
\caption{{\em Simulation studies on cooperative regularized linear regression when X and Z are high-dimensional and have a high level of correlation with each other.} {\em (A)} Simulation results when $X$ and $Z$ have a high level of correlation and both contain signal ($t_{x} = t_{z} = 6$), $n = 200, p = 1000$, SNR $= 1.0$. The first panel shows MSE on a test set; the second panel shows the MSE difference on the test set relative to early fusion; the third panel shows the number of features selected; the fourth panel shows the $\rho$ values selected by CV in cooperative 
learning. Here ``Coop'' refers to cooperative 
learning outlined in Algorithm 1 and ``Adap Coop'' refers to adaptive cooperative learning outlined in Algorithm \ref{alg:adaptive_full}. 
{\em (B)} Simulation results when $X$ and $Z$ have a high level of correlation and both contain signal ($t_{x} = t_{z} = 6$), $n = 200, p = 1000$, SNR $= 0.6$.
{\em (C)} Simulation results when $X$ and $Z$ have a high level of correlation, X contains more signal than Z ($t_{x} = 4, t_{z} = 2$), $n = 200, p = 1000$, SNR $= 0.6$.
}
\label{fig:res_1}
\end{figure}

\begin{figure}[h!]  
\includegraphics[width=1\textwidth]{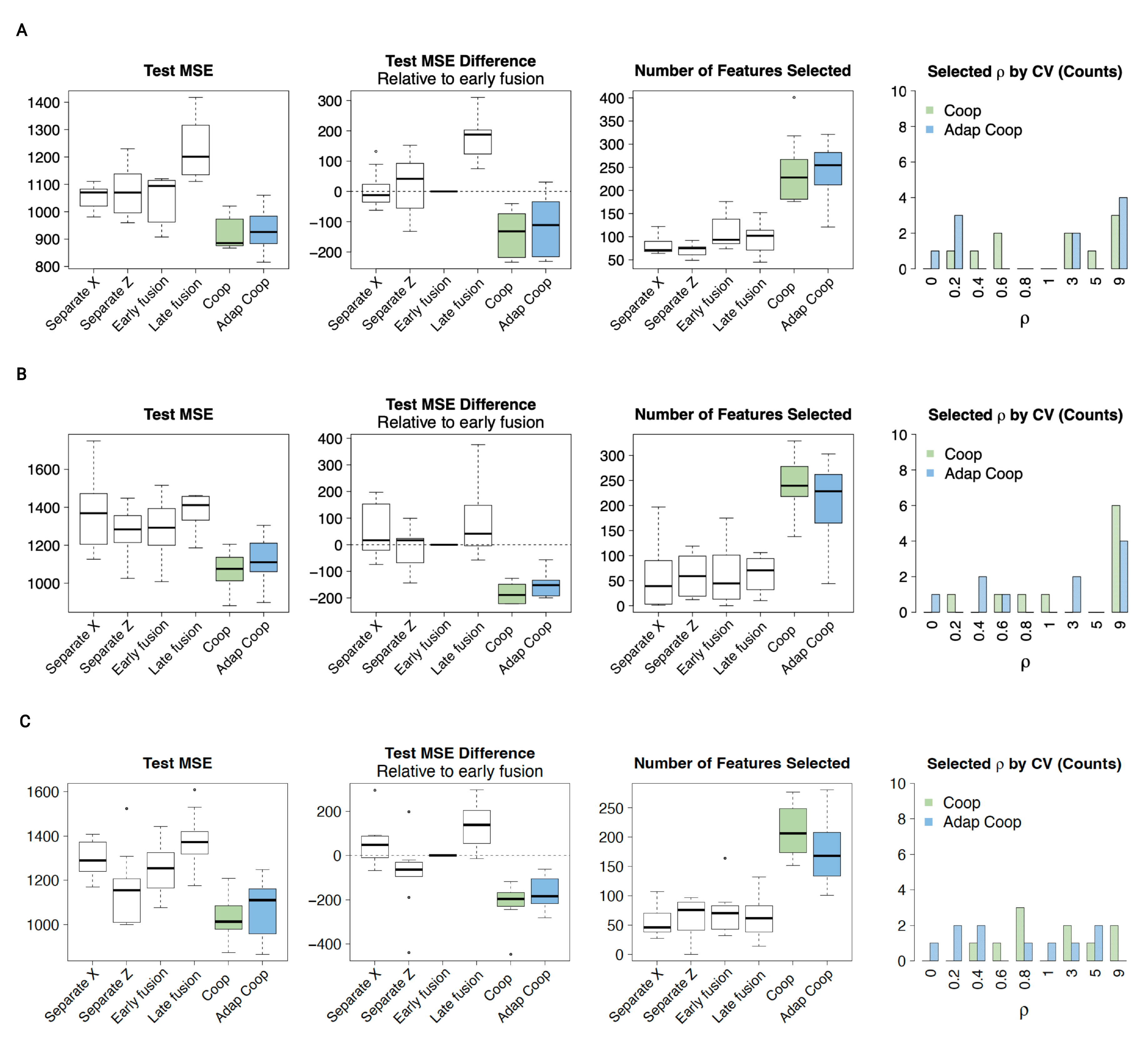}
\caption{{\em Simulation studies on cooperative regularized linear regression when X and Z are high-dimensional and have a medium level of correlation with each other.} {\em (A)}  Simulation results when $X$ and $Z$ have a medium level of correlation and both contain signal ($t_{x} = t_{z} = 2$), $n = 200, p = 1000$, SNR $= 3.5$. The setup is the same as in Figure \ref{fig:res_1}.
{\em (B)} Simulation results when $X$ and $Z$ have a medium level of correlation and both contain signal ($t_{x} = t_{z} = 2$), $n = 200, p = 1000$, SNR $= 1.6$.
{\em (C)} Simulation results when $X$ and $Z$ have a medium level of correlation, and Z contains more signal than X ($t_{x} = 2, t_{z} = 3$), $n = 200, p = 1000$, SNR $= 1.5$.}
\end{figure}

\begin{figure}[h!]  
\includegraphics[width=1\textwidth]{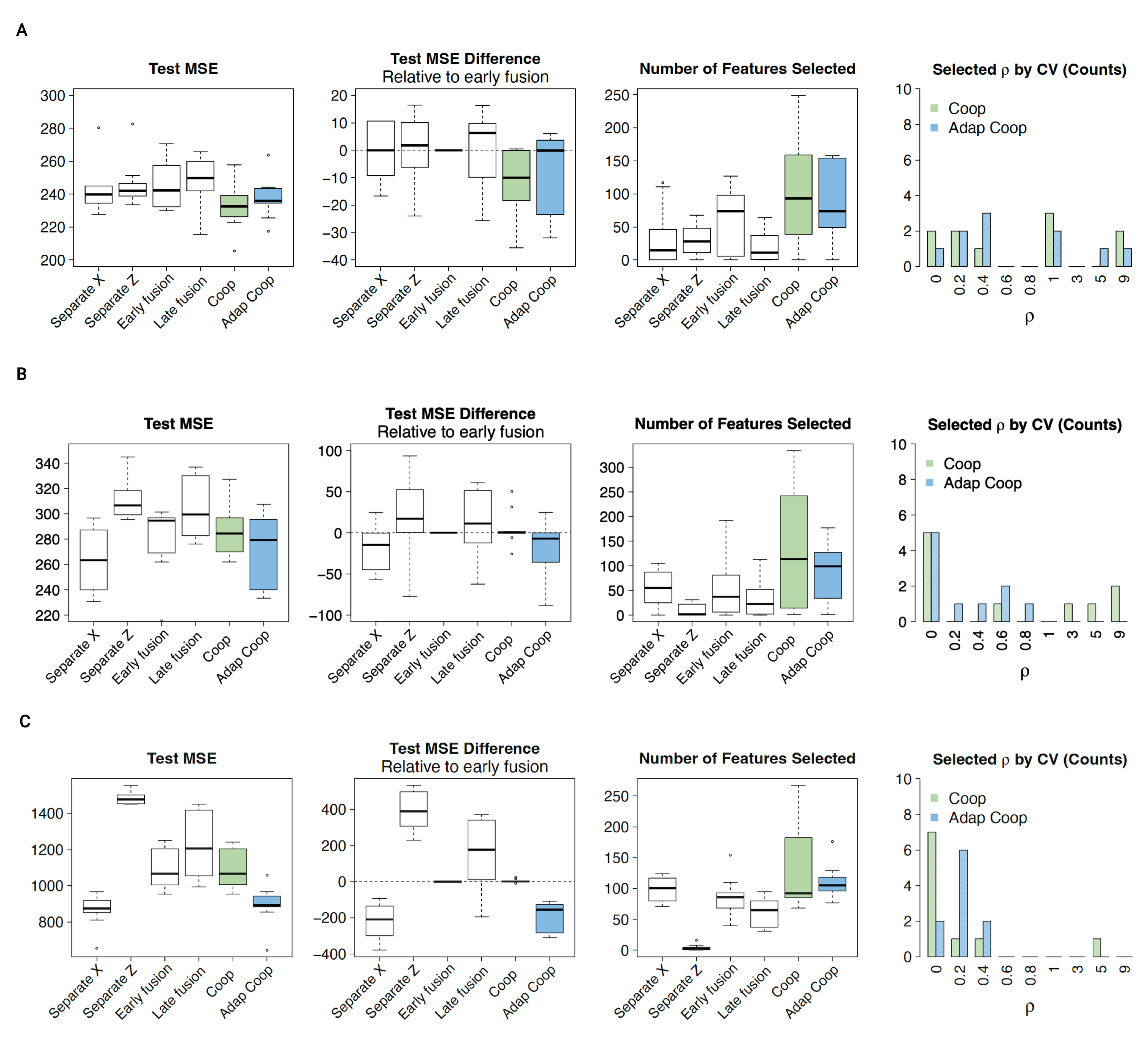}
\caption{{\em Simulation studies on cooperative regularized linear regression when X and Z are high-dimensional and have no correlation.} {\em (A)}   Simulation results when $X$ and $Z$ have no correlation, and both $X$ and $Z$ contain signal (here we generated $y$ as a linear combination of $X$ and $Z$ instead of the latent factors), $n = 200, p = 1000$, SNR $= 1.0$. The setup is the same as in Figure \ref{fig:res_1}.
{\em (B)} Simulation results when $X$ and $Z$ have no correlation; $X$ contains more signal than $Z$ (here we generated $y$ as a linear combination of $X$ and $Z$ instead of the latent factors), $n = 200, p = 1000$, SNR $= 1.1$.
{\em (C)} Simulation results when $X$ and $Z$ have no correlation; only $X$ contains signal ($t_{x} = 2, t_{z} = 0$), $n = 200, p = 1000$, SNR $= 3.5$.}
\end{figure}
\clearpage

\subsection{Simulation results of the lower-dimensional settings ($p=200, n=500$)}

\begin{figure}[h!]  
\includegraphics[width=1\textwidth]{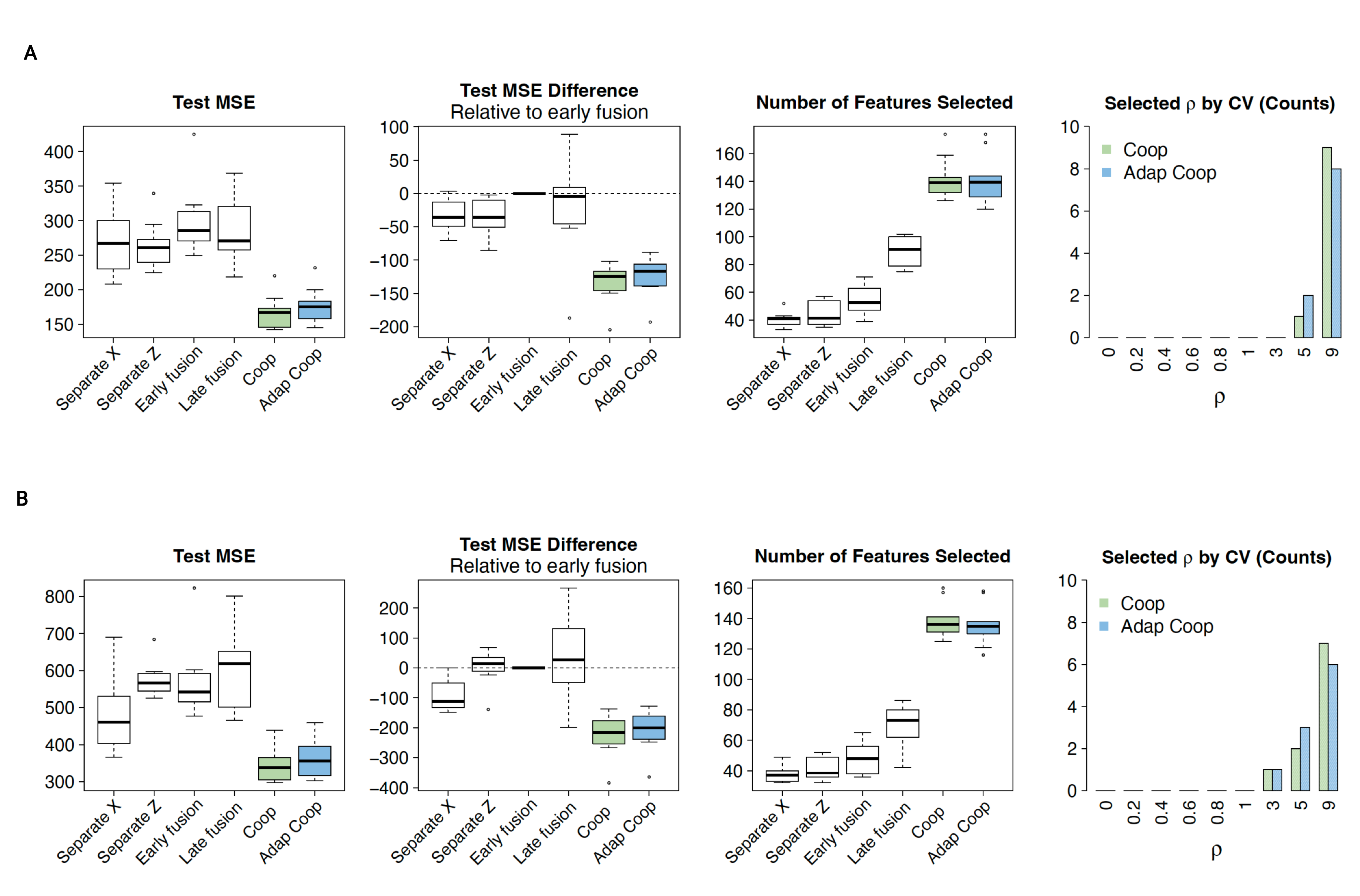}
\caption{{\em Simulation studies on cooperative regularized linear regression when X and Z are of a lower dimension and have a high level of correlation with each other.} {\em (A)} Simulation results when $X$ and $Z$ have a high level of correlation and both contain signal ($t_{x} = t_{z} = 6$), $n = 500, p = 200$, SNR $= 1.2$. The first panel shows MSE on a test set; the second panel shows the MSE difference on the test set relative to early fusion; the third panel shows the number of features selected; the fourth panel shows the $\rho$ values selected by CV in cooperative 
learning. Here ``Coop'' refers to cooperative 
learning outlined in Algorithm 1 and ``Adap Coop'' refers to adaptive cooperative learning outlined in Algorithm \ref{alg:adaptive_full}. 
{\em (B)} Simulation results when $X$ and $Z$ have a high level of correlation and X contains more signal than Z ($t_{x} = 5, t_{z} = 3$), $n = 500, p = 200$, SNR $= 0.7$.}
\label{fig:res_S4}
\end{figure}
\clearpage

\begin{figure}[h!]  
\includegraphics[width=1\textwidth]{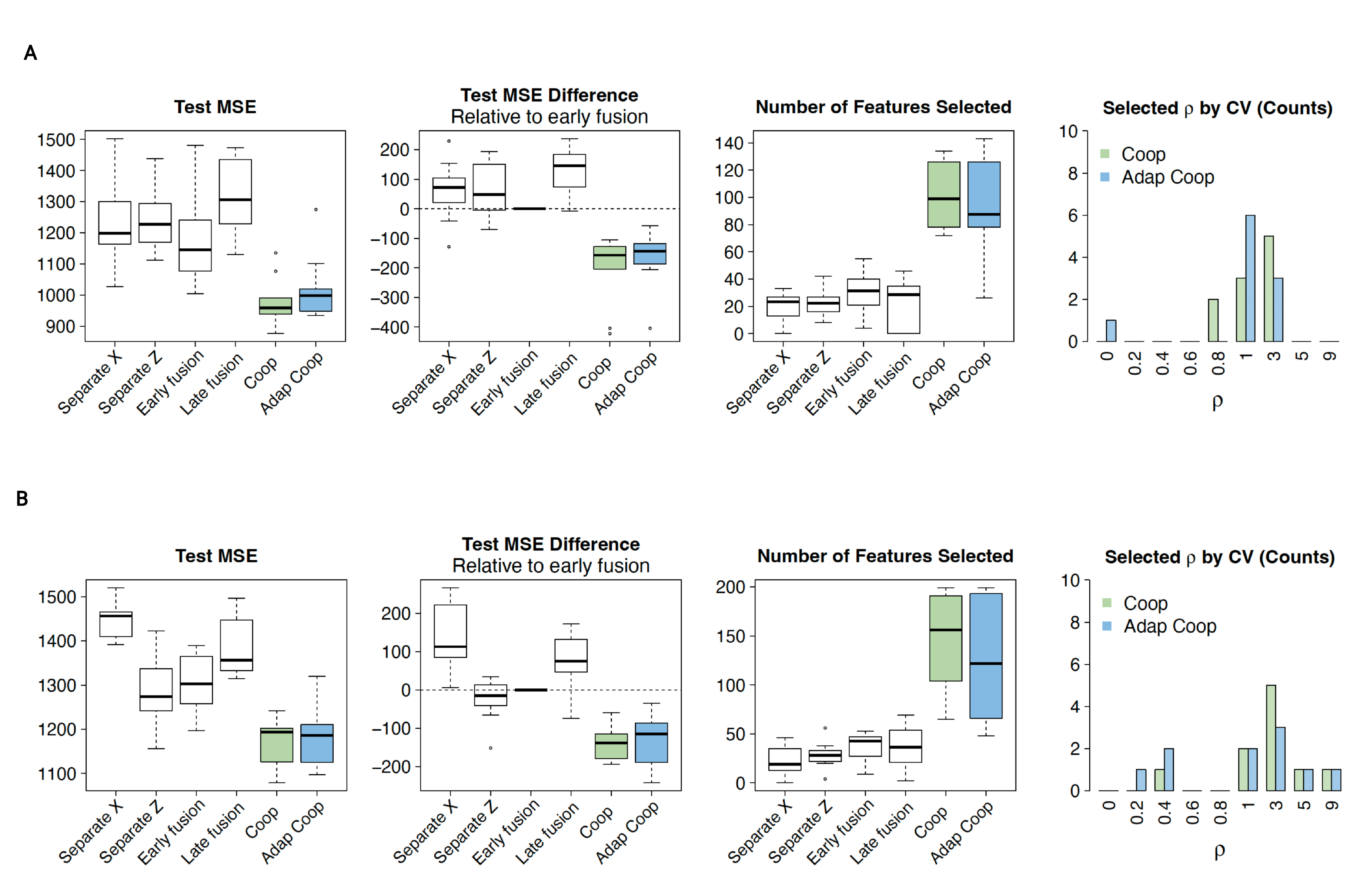}
\caption{{\em Simulation studies on cooperative regularized linear regression when X and Z are of a lower dimension and have a medium level of correlation with each other.} {\em (A)} Simulation results when $X$ and $Z$ have a medium level of correlation and both contain signal ($t_{x} = t_{z} = 1$), $n = 500, p = 200$, SNR $= 0.8$. The setup is the same as in Figure \ref{fig:res_S4}.
{\em (B)} Simulation results when $X$ and $Z$ have a medium level of correlation, and $Z$ contains more signal than $X$ ($t_{x} = 0.6, t_{z} = 0.9$), $n = 500, p = 200$, SNR $= 0.5$.}
\end{figure}
\clearpage

\begin{figure}[h!]  
\includegraphics[width=1\textwidth]{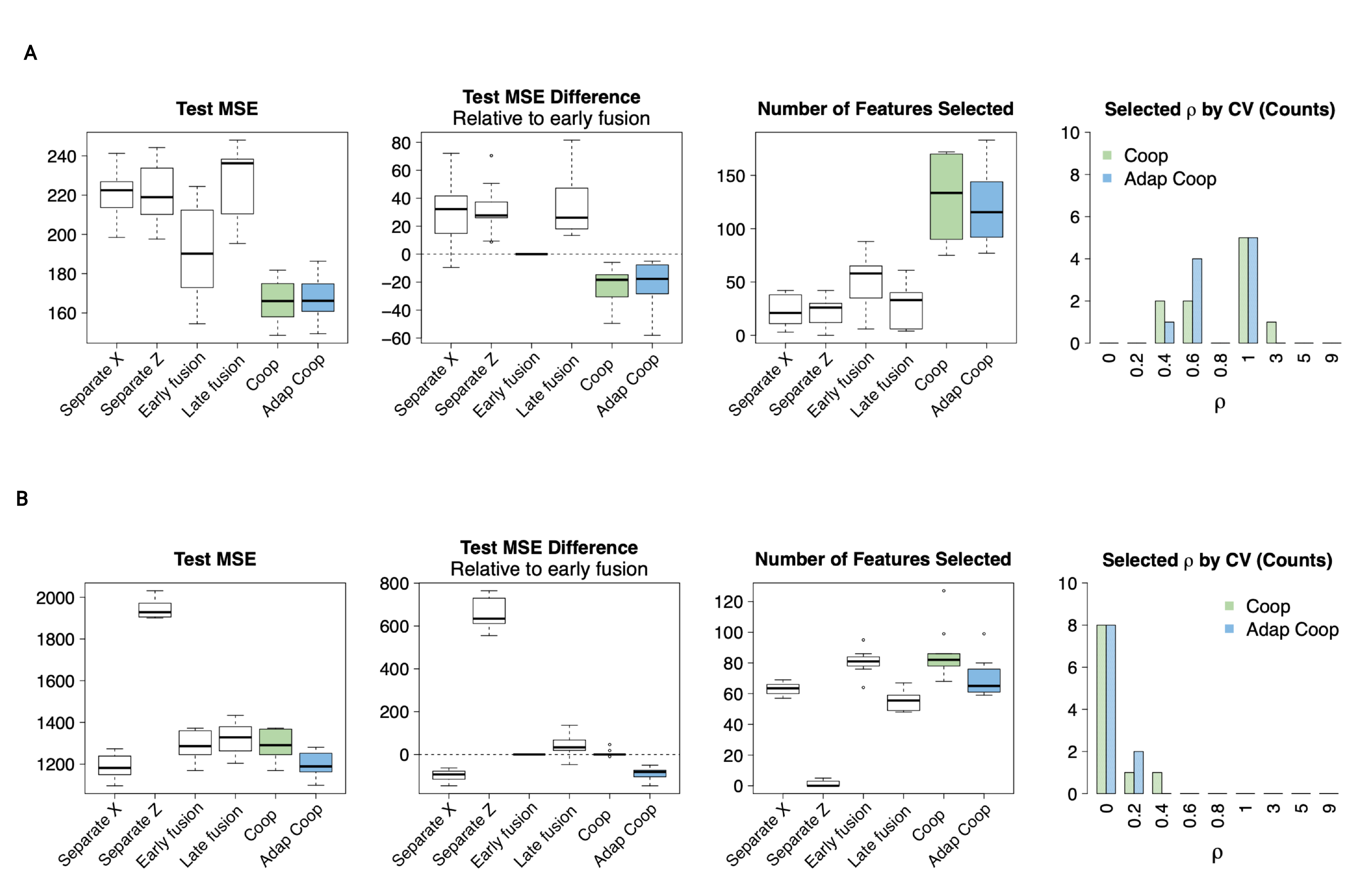}
\caption{{\em Simulation studies on cooperative regularized linear regression when X and Z are of a lower dimension and have no correlation with each other.} {\em (A)} Simulation results when $X$ and $Z$ have no correlation ($t_{x} = t_{z} = 0$), and both $X$ and $Z$ contain signal (here we generated $y$ as a linear combination of $X$ and $Z$ instead of the latent factors), $n = 500, p = 200$, SNR $= 0.3$. The setup is the same as in Figure \ref{fig:res_S4}.
{\em (B)} Simulation results when $X$ and $Z$ have no correlation and only $X$ contains signal ($t_{x} = 2, t_{z} = 0$), $n = 500, p = 200$, SNR $= 3.0$.}
\end{figure}

\section{Simulation studies on cooperative learning for more than two data views}
\label{sec:appendix_more_than_two_views}

Here we conduct simulation studies on cooperative learning for more than two data views. 
Specifically, we consider the setting of three data views, and this generalizes easily to more data views.
We generated Gaussian data with $n=200$ and $p=300$ in each of the views $X_1$,  $X_2$ and $X_3$, and created correlation between them using latent factors.
The response $\y$ was generated as a linear combination of the latent factors, corrupted by Gaussian noise.

\subsection{Simulation procedure for more than two data views}
The simulation for 3 data views is set up as follows.

Given values for parameters $n, p_{x_{1}}, p_{x_{2}}$, $p_{x_{3}}, p_{u}, s_{u}, t_{x_{1}}, t_{x_{2}}, t_{x_{3}}, \bbeta_{u}, \sigma$, we generate data according to the following procedure: 
\begin{enumerate}
    \item $x_{1j} \in \mR^{n}$ distributed i.i.d. MVN$(0, I_{n})$ for $j = 1,2,\ldots,p_{x_{1}}$.
    \item $x_{2j} \in \mR^{n}$ distributed i.i.d. MVN$(0, I_{n})$ for $j = 1,2,\ldots,p_{x_{2}}$.
    \item $x_{3j} \in \mR^{n}$ distributed i.i.d. MVN$(0, I_{n})$ for $j = 1,2,\ldots,p_{x_{3}}$.
    \item For $i = 1, 2, \ldots, p_{u}$ ($p_{u}$ corresponds to the number of latent factors):
    \begin{enumerate}
        \item $u_{i} \in \mR^{n}$ distributed i.i.d. MVN$(0, s_{u}^2I_{n})$;
        \item $x_{1i} = x_{1i} + t_{x_{1}} * u_{i}$;
        \item $x_{2i} = x_{2i} + t_{x_{2}} * u_{i}$;
        \item $x_{3i} = x_{3i} + t_{x_{3}} * u_{i}$.
    \end{enumerate}
    \item $X_{1}=[x_{11}, x_{12}, \ldots, x_{1p_{x_{1}}}]$, $X_{2}=[x_{21}, x_{22}, \ldots, x_{2p_{x_{2}}}]$, $X_{3}=[x_{31}, x_{32}, \ldots, x_{3p_{x_{3}}}]$.
    \item $U = [u_{1}, u_{2}, \ldots, u_{p_{u}}]$, $\y = U\bbeta_{u} + \epsilon$ where $\epsilon \in \mR^{n}$ distributed i.i.d. MVN$(0, \sigma^2 I_{n})$.
\end{enumerate}

We compare the following methods: (1) separate $X_1$, separate $X_2$, and separate $X_3$: the standard lasso is applied on the separate data views of $X_1$,  $X_2$ and $X_3$ with 10-fold CV; (2) early fusion: the standard lasso is applied on the concatenated data views of $X_1$,  $X_2$ and $X_3$ with 10-fold CV (note that this is equivalent to cooperative learning with $\rho = 0$); (3) late fusion: separate lasso models are first fitted on $X_1$,  $X_2$ and $X_3$ independently with 10-fold CV, and the three resulting predictors are then combined through linear least squares for the final prediction; (4) cooperative learning (regression) and adaptive cooperative learning.

We evaluated the performance based on the mean-squared error (MSE) on a test set and conducted each simulation experiment 10 times.

\subsection{Simulation results for more than two data views}

Figure \ref{fig:res_more_than_2_data_views_1} and \ref{fig:res_more_than_2_data_views_2} show the simulation results for 3 data views.
Overall, the simulation results can be summarized as follows:
\begin{itemize}
\item Cooperative learning performs the best in terms of test MSE across the range of SNR and correlation settings.
It is most helpful when the data views are correlated and contain signal (as in Figure \ref{fig:res_more_than_2_data_views_1}{\em A} and Figure \ref{fig:res_more_than_2_data_views_2}{\em A}). 
When the correlation between data views is higher, higher values of $\rho$ are more likely to be selected.

\item When only two data views are correlated and contain signal (as in Figure \ref{fig:res_more_than_2_data_views_1}{\em B} and Figure \ref{fig:res_more_than_2_data_views_2}{\em C}), cooperative learning also gives performance gains by leveraging the correlation through the agreement penalty, while early fusion can be outperformed by the separate models fit on the data views containing the signal.

\item When only one view contains signal and the views are not correlated (as in Figure \ref{fig:res_more_than_2_data_views_1}{\em C}), cooperative learning is outperformed by the separate model fit on the view containing the signal, but adaptive cooperative learning is able to perform on par with the separate model, outperforming early and late fusion.
\end{itemize}

\begin{figure}[h!]
\includegraphics[width=1\textwidth]{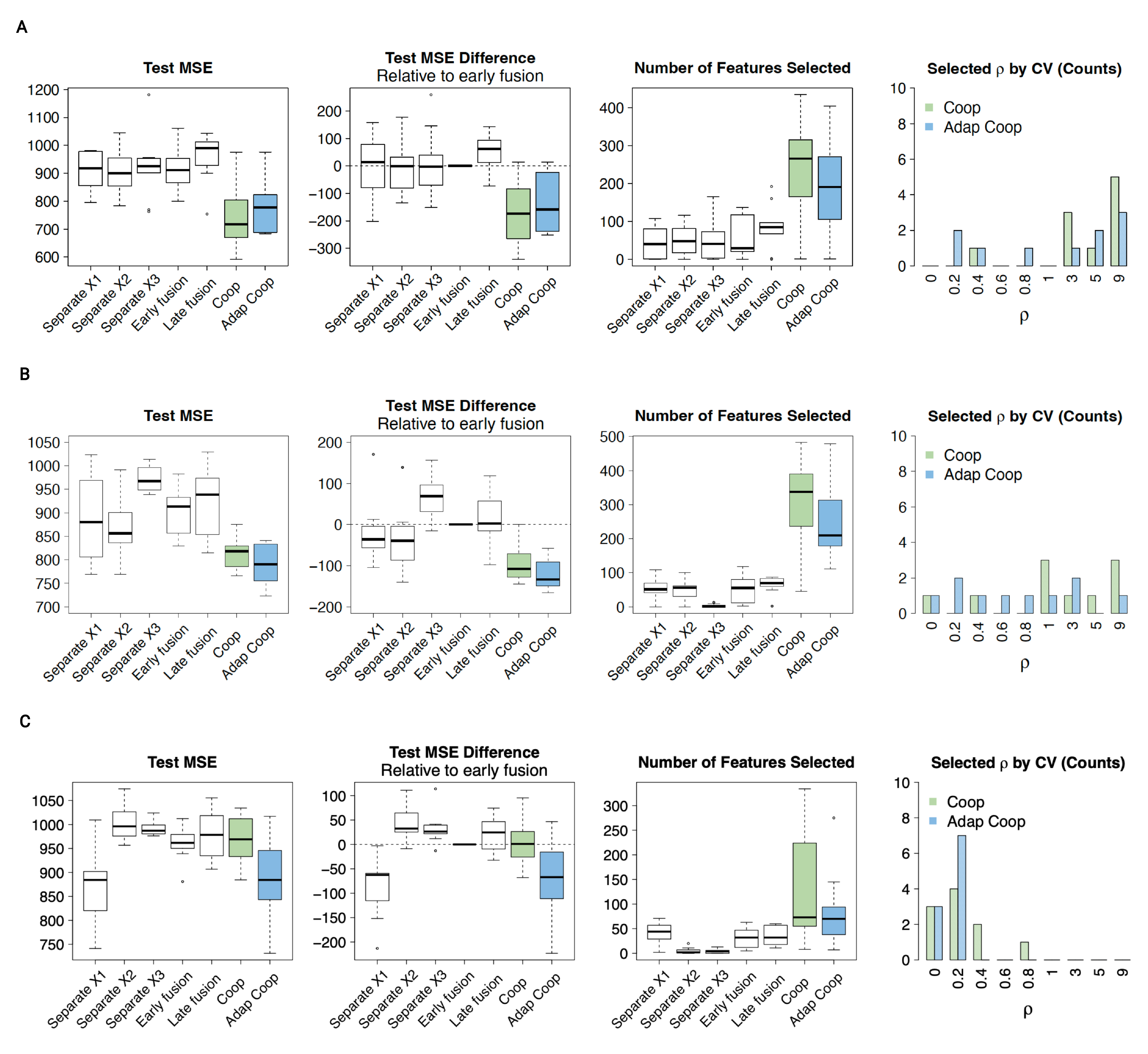}
\caption{{\em Simulation studies on cooperative regularized linear regression for more than two data views.} {\em (A)} Simulation results when $X_1$, $X_2$ and $X_3$ are correlated and all contain signal ($t_{x_1} = t_{x_2} = t_{x_3} = 2$), $n = 200, p = 900$, SNR $= 1.5$. The first panel shows MSE on a test set; the second panel shows the MSE difference on the test set relative to early fusion; the third panel shows the number of features selected; the fourth panel shows the $\rho$ values selected by CV in cooperative 
learning. Here ``Coop'' refers to cooperative 
learning outlined in Algorithm 1 and ``Adap Coop'' refers to adaptive cooperative learning outlined in Algorithm \ref{alg:adaptive_full}. 
{\em (B)} Simulation results when only $X_1$ and $X_2$ are correlated and contain signal ($t_{x_1} = t_{x_2} = 2$, $t_{x_3} = 0$), $n = 200, p = 900$, SNR $= 1.5$.
{\em (C)} Simulation results when $X_1$, $X_2$ and $X_3$ are uncorrelated, and only $X_1$ contains signal ($t_{x_1} = 2, t_{x_2} = t_{x_3} = 0$), $n = 200, p = 900$, SNR $= 1.5$.
}
\label{fig:res_more_than_2_data_views_1}
\end{figure}

\begin{figure}[h!]
\includegraphics[width=1\textwidth]{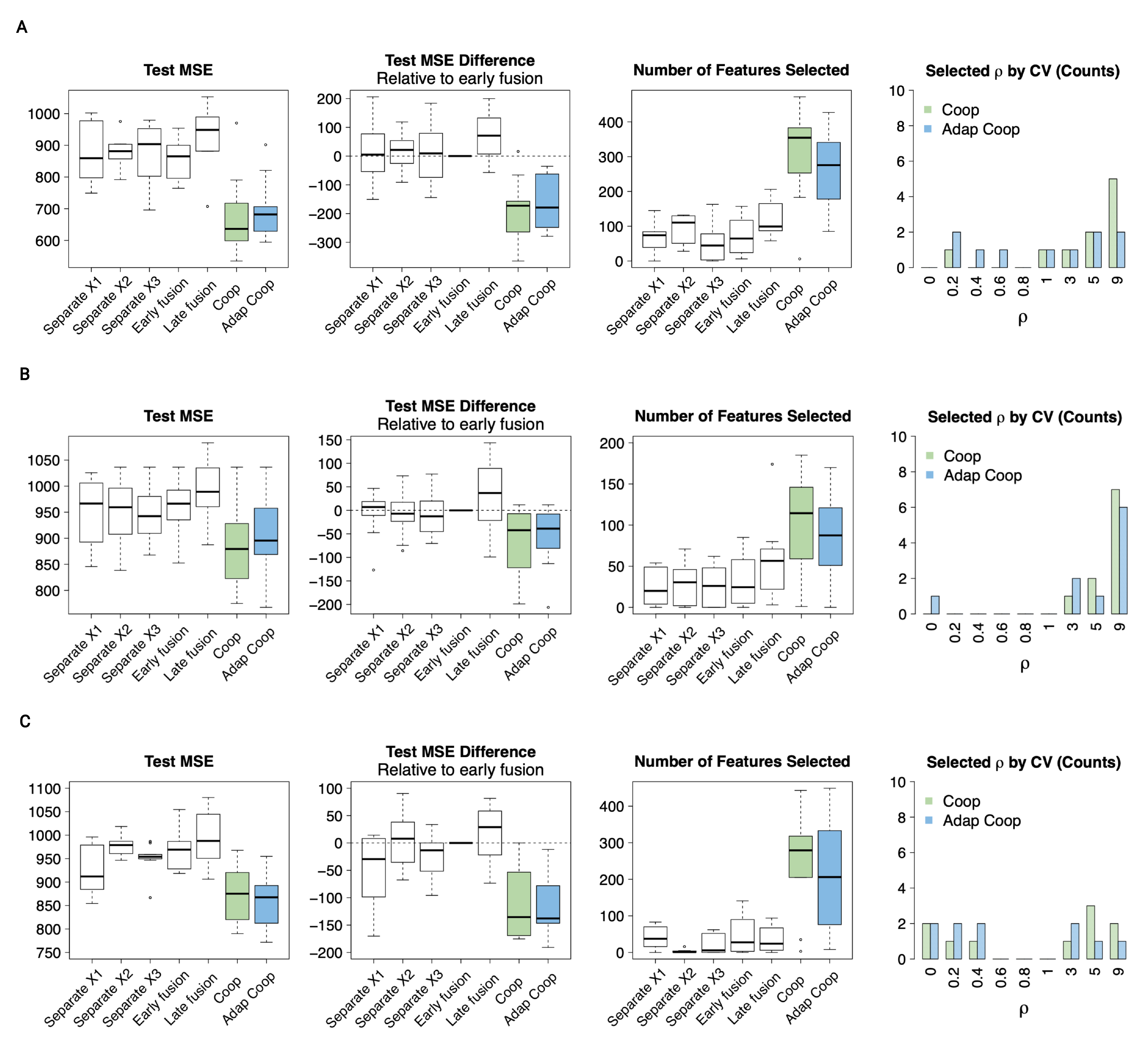}
\caption{{\em Simulation studies on cooperative regularized linear regression for more than two data views.} {\em (A)} Simulation results when $X_1$, $X_2$ and $X_3$ are correlated and all contain signal ($t_{x_1} = t_{x_2} = t_{x_3} = 2$), $n = 200, p = 900$, SNR $= 2.5$. The first panel shows MSE on a test set; the second panel shows the MSE difference on the test set relative to early fusion; the third panel shows the number of features selected; the fourth panel shows the $\rho$ values selected by CV in cooperative 
learning. Here ``Coop'' refers to cooperative 
learning outlined in Algorithm 1 and ``Adap Coop'' refers to adaptive cooperative learning outlined in Algorithm \ref{alg:adaptive_full}. 
{\em (B)}  Simulation results when $X_1$, $X_2$ and $X_3$ are correlated and all contain signal ($t_{x_1} = t_{x_2} = t_{x_3} = 2$), $n = 200, p = 900$, SNR $= 0.6$.
{\em (c)} Simulation results when only $X_1$ and $X_3$ are correlated; $X_1$ contains more signal than $X_3$, $X_2$ does not contain signal ($t_{x_1} = 2$, $t_{x_2} = 0$, $t_{x_3} = 1.5$), $n = 200, p = 900$, SNR $= 1.0$.
}
\label{fig:res_more_than_2_data_views_2}
\end{figure}
\clearpage

\section{Theoretical analysis under the factor model}
\label{sec:appendix_theoretical_analysis}

To understand the role of the agreement penalty from a theoretical perspective, we consider the following latent factor model. Let $\bu = (U_1, U_2, \dots, U_n)$ be a vector of $n$ i.i.d. random variables with $U_i \sim \mathcal{N}(0,1)$. Let $\by = (Y_1, \dots, Y_n)$, $\bx = (X_1, \dots, X_n)$, and $\bz = (Z_1, \dots, Z_n)$, with
\begin{equation}
\label{eqn:latent_model}
    Y_i = \gy U_i + \varepsilon_{yi}, \qquad 
    X_i = \gx U_i + \varepsilon_{xi}, \qquad \text{and} \qquad
    Z_i = \gz U_i + \varepsilon_{zi},
\end{equation}
where $\varepsilon_{yi} \sim \mathcal{N}\p{0,\sigma_y^2}$, $\varepsilon_{xi} \sim \mathcal{N}\p{0,\sigma_x^2}$, $\varepsilon_{zi} \sim \mathcal{N}\p{0,\sigma_z^2}$ independently. 

In this section, we study the mean squared error (MSE) of the cooperative learning algorithm. More precisely, let
\begin{equation}
\label{eqn:theta_hat}
    \hat{\theta} = \operatorname{argmin}_{\theta}\sum_{i = 1}^n \sqb{\frac{1}{2}\p{Y_i - X_i\theta_x - Z_i \theta_z}^2 + \frac{\rho}{2}\p{X_i\theta_x - Z_i \theta_z}^2}. 
\end{equation}
Let $U\new$, $X\new$, $Y\new$, $Z\new$ be some new random variables generated from \eqref{eqn:latent_model} independently of the previous data, i.e.,
\begin{equation}
        Y\new = \gy U\new +\varepsilon_{y \operatorname{new}}, \qquad 
        X\new = \gx U\new +\varepsilon_{x \operatorname{new}}, \qquad \text{and} \qquad
        Z\new = \gz U\new +\varepsilon_{z \operatorname{new}},
\end{equation}
where $U\new \sim \mathcal{N}(0,1)$, $\varepsilon_{y \operatorname{new}} \sim \mathcal{N}\p{0,\sigma_y^2}$, $\varepsilon_{x \operatorname{new}} \sim \mathcal{N}\p{0,\sigma_x^2}$, $\varepsilon_{z \operatorname{new}} \sim \mathcal{N}\p{0,\sigma_z^2}$ independently. We focus on the MSE conditioning on $\bx$ and $\bz$:
\begin{equation}
\label{eqn:mse_def}
    \operatorname{MSE}(\bx, \bz; \rho) = \EE{\p{Y\new - \p{X\new \hat{\theta}_x + Z\new \hat{\theta}_z}}^2 \mid \bx, \bz}. 
\end{equation}
The case of $\rho = 0$ corresponds to the linear regression with no agreement penalty. We will study the behavior of $\operatorname{MSE}(\bx, \bz; \rho)$ when $\rho$ is around $0$. 

\begin{prop}
\label{prop:dev_neg}
The derivative of $\operatorname{MSE}(\bx, \bz; \rho)$ satisfies
\begin{equation}
\label{eqn:neg_cond}
\frac{d}{d\rho}\sqb{\operatorname{MSE}(\bx, \bz; \rho)} |_{\rho = 0} = \sigma^{\star 2} (C_2 B_1 - 2 C_1 B_2)/C_2^3,
\end{equation}
where 
\begin{equation}
\label{eqn:C1_B2}
\begin{split}
\sigma^{\star 2} & = \frac{\gy^2}{1 + \gamma_x^2/\sigma_x^2 + \gamma_z^2/\sigma_z^2} + \sigma_y^2,\\
C_1 &= \sqb{(\gx^2 + \sigma_x^2)(\bz \trans \bz) + (\gz^2 + \sigma_z^2)(\bx \trans \bx) - 2\gx \gz (\bx \trans \bz)}((\bx\trans \bx)(\bz\trans \bz) - (\bx\trans \bz)^2),\\
B_1 & = 2\sqb{(\gx^2 + \sigma_x^2)(\bz \trans \bz) + (\gz^2 + \sigma_z^2)(\bx \trans \bx) + 2\gx \gz (\bx \trans \bz)}((\bx\trans \bx)(\bz\trans \bz) - (\bx\trans \bz)^2),\\
C_2 &= (\bx\trans \bx)(\bz\trans \bz) - (\bx\trans \bz)^2,\\
B_2 &= 2\p{(\bx\trans \bx)(\bz\trans \bz) + (\bx\trans \bz)^2}. 
\end{split}
\end{equation}
\end{prop}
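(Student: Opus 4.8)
The plan is to treat \eqref{eqn:theta_hat} as a penalized least–squares problem, write its solution in closed form, reduce the quantity in \eqref{eqn:mse_def} to an excess–risk expression, and then differentiate at $\rho = 0$. Throughout write $a = \bx\trans\bx$, $b = \bz\trans\bz$, $c = \bx\trans\bz$, $W = (\bx\ \bz)$, $G = W\trans W$ (so $\det G = C_2$), and $D = \operatorname{diag}(a,b)$. First I would observe that, since \eqref{eqn:theta_hat} is an unconstrained convex quadratic in $\theta = (\theta_x,\theta_z)$, its minimizer $\hat\theta(\rho)$ solves the normal equations $A(\rho)\hat\theta(\rho) = W\trans\by$, where $A(\rho)$ is the $2\times 2$ matrix with diagonal entries $(1+\rho)a$ and $(1+\rho)b$ and off–diagonal entry $(1-\rho)c$; this is invertible for $\rho$ near $0$ because $A(0) = G$ is invertible under the full–rank hypothesis. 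Since $A'(\rho)\equiv 2D - G$, implicit differentiation of $A(\rho)\hat\theta(\rho) = W\trans\by$ at $\rho = 0$ gives $\hat\theta'(0) = -G^{-1}(2D-G)\hat\theta(0) = (I - 2G^{-1}D)\hat\theta(0)$, where $\hat\theta(0) = G^{-1}W\trans\by$ is the ordinary least–squares fit.

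Next I would rewrite the MSE. For a fixed coefficient vector $\theta$, decomposing $Y\new - X\new\theta_x - Z\new\theta_z$ into its $U\new$–component and its independent noise components gives $\mathbb{E}\sqb{(Y\new - X\new\theta_x - Z\new\theta_z)^2} = (\gy - \gx\theta_x - \gz\theta_z)^2 + \sigma_y^2 + \theta_x^2\sigma_x^2 + \theta_z^2\sigma_z^2 =: R(\theta)$, a convex quadratic in $\theta$. Completing the square identifies its minimizer as the population best linear predictor $\theta^\star$ with $R(\theta^\star) = \sigma^{\star 2}$ (a short computation recovers the stated formula for $\sigma^{\star 2}$), and its half–Hessian as $M$, the $2\times2$ matrix with diagonal entries $\gx^2+\sigma_x^2$ and $\gz^2+\sigma_z^2$ and off–diagonal entry $\gx\gz$ — that is, the population covariance of $(X_i,Z_i)$. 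Hence $R(\theta) = \sigma^{\star 2} + (\theta-\theta^\star)\trans M(\theta-\theta^\star)$, and since $\hat\theta$ depends only on the training data, $\operatorname{MSE}(\bx,\bz;\rho) = \sigma^{\star2} + \mathbb{E}\sqb{(\hat\theta(\rho)-\theta^\star)\trans M(\hat\theta(\rho)-\theta^\star)\mid\bx,\bz}$. The key structural fact I would then establish, by coordinatewise Gaussian conditioning in \eqref{eqn:latent_model}, is that $\by\mid\bx,\bz\sim\mathcal N(W\theta^\star,\ \sigma^{\star 2}I_n)$: indeed $\mathbb{E}\sqb{Y_i\mid X_i,Z_i} = (X_i,Z_i)\theta^\star$ and $\operatorname{Var}\sqb{Y_i\mid X_i,Z_i} = \gy^2\operatorname{Var}\sqb{U_i\mid X_i,Z_i} + \sigma_y^2 = \sigma^{\star 2}$. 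Consequently $\hat\theta(0)$ is conditionally unbiased for $\theta^\star$ with conditional covariance $\sigma^{\star2}G^{-1}$.

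Finally I would differentiate. Because $\hat\theta(\rho)$ is smooth in $\rho$ near $0$ and quadratic in $\by$ with coefficients uniformly bounded on a neighborhood of $0$, differentiation under the conditional expectation is justified by dominated convergence and yields $\frac{d}{d\rho}\operatorname{MSE}(\bx,\bz;\rho)\big|_{\rho=0} = 2\,\mathbb{E}\sqb{(\hat\theta(0)-\theta^\star)\trans M\,\hat\theta'(0)\mid\bx,\bz}$. Writing $\hat\theta(0) = \theta^\star + \delta$ with $\mathbb{E}\sqb{\delta\mid\bx,\bz} = 0$ and $\mathbb{E}\sqb{\delta\delta\trans\mid\bx,\bz} = \sigma^{\star2}G^{-1}$, and substituting $\hat\theta'(0) = (I - 2G^{-1}D)(\theta^\star+\delta)$ from the first paragraph, the term linear in $\delta$ has zero mean and what survives is $2\sigma^{\star2}\operatorname{tr}\!\p{M(I-2G^{-1}D)G^{-1}} = 2\sigma^{\star2}\sqb{\operatorname{tr}(MG^{-1}) - 2\operatorname{tr}(MG^{-1}DG^{-1})}$. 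Evaluating these two explicit $2\times2$ traces in terms of $a,b,c$ and the entries of $M$, and reorganizing, gives $\operatorname{tr}(MG^{-1}) = C_1/C_2^2$ together with a parallel expression for the second trace, so that the bracket equals $(C_2 B_1 - 2C_1 B_2)/(2C_2^3)$; multiplying by $2\sigma^{\star2}$ produces the claimed identity \eqref{eqn:neg_cond}.

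The conceptual crux is the observation that $\operatorname{Var}(\by\mid\bx,\bz)$ is \emph{exactly} $\sigma^{\star2}I_n$, which is what both kills the bias term and makes the conditional variance of $\hat\theta(0)$ equal $\sigma^{\star2}G^{-1}$; everything else is either standard (the excess–risk decomposition, the implicit differentiation) or routine. I expect the main obstacle to be purely computational: carrying the final $2\times2$ trace evaluation through cleanly and matching it to the somewhat unwieldy $C_1,B_1,B_2,C_2$ notation without sign or bookkeeping errors.
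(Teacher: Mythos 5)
Your proposal is correct, and its backbone coincides with the paper's: both arguments reduce the problem to the conditional Gaussian representation $\by \mid \bx,\bz \sim \mathcal{N}(\bx\thetas_x + \bz\thetas_z,\ \sigma^{\star 2} I_n)$, decompose the MSE as $\sigma^{\star 2}$ plus a quadratic excess-risk term in $\hat\theta(\rho)-\thetas$, and use the conditional unbiasedness of the $\rho=0$ (OLS) fit so that the bias contribution to the derivative at $\rho=0$ vanishes and only the variance-type term survives. Where you genuinely diverge is in the differentiation mechanics. The paper writes $\hat\theta(\rho)$ in closed form, expands the conditional bias and the conditional variances/covariance of $\hat\theta_x,\hat\theta_z$ as explicit rational functions of $\rho$, and applies its Lemma 1 on the derivative of $(a_1\rho^2+b_1\rho+c_1)/(a_2\rho^2+b_2\rho+c_2)^2$ at $\rho=0$, which produces the constants $C_1,B_1,C_2,B_2$ directly in the stated form. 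You instead implicitly differentiate the normal equations $A(\rho)\hat\theta(\rho)=W\trans\by$ to get $\hat\theta'(0)=(I-2G^{-1}D)\hat\theta(0)$, differentiate under the conditional expectation, and land on the compact matrix expression $2\sigma^{\star 2}\sqb{\operatorname{tr}(MG^{-1})-2\operatorname{tr}(MG^{-1}DG^{-1})}$; I checked the $2\times 2$ algebra and this does equal $\sigma^{\star 2}(C_2B_1-2C_1B_2)/C_2^3$ (and your intermediate identity $\operatorname{tr}(MG^{-1})=C_1/C_2^2$ is also right), so the matching step you left as ``routine'' indeed goes through. Your route buys compactness and would generalize more gracefully (e.g., to more than two views, where blockwise matrix calculus beats expanding scalar rational functions), while the paper's scalar expansion has the advantage of producing the quoted coefficients $C_1,B_1,C_2,B_2$ with no final reconciliation step; your appeal to dominated convergence for exchanging derivative and expectation is also fine, though the paper sidesteps it by computing the bias and variance exactly as functions of $\rho$ before differentiating.
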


\vspace{3mm}

\begin{prop}
\label{prop:dev_big_O}
The derivative of $\operatorname{MSE}(\bx, \bz; \rho)$ at $\rho = 0$ satisfies 
\begin{equation}
\frac{d}{d\rho}\sqb{\operatorname{MSE}(\bx, \bz; \rho)} |_{\rho = 0} = -\frac{4}{n}\p{1 + \frac{2 \gx^2 \gz^2}{ \sigma_x^2\gz^2 + \sigma_z^2 \gx^2 + \sigma_x^2 \sigma_z^2 }}\p{\sigma_y^2 + \frac{\gy^2 \sigma_x^2 \sigma_z^2}{\sigma_x^2\gz^2 + \sigma_z^2 \gx^2 + \sigma_x^2 \sigma_z^2} } + \mathcal{O}_p\p{n^{-\frac{3}2}}. 
\end{equation}
Here the notation $\mathcal{O}_p(\cdot)$ is used with the following meaning: 
$X_{n}= \mathcal{O}_{p}\left(a_{n}\right) \text { as } n \rightarrow \infty$
means that for any $\varepsilon>0$, there exists a finite $M>0$ and a finite $N>0$ such that $
\PP{\left|X_{n} / a_{n}\right|>M}<\varepsilon, \forall n>N$. 
\end{prop}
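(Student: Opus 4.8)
\textbf{Proof proposal for Proposition \ref{prop:dev_big_O}.}

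The plan is to start from the exact formula for the derivative established in Proposition \ref{prop:dev_neg}, namely $\frac{d}{d\rho}[\operatorname{MSE}(\bx,\bz;\rho)]|_{\rho=0} = \sigma^{\star 2}(C_2 B_1 - 2 C_1 B_2)/C_2^3$, and then analyze the asymptotic behavior of each of the quantities $C_1, B_1, C_2, B_2$ as $n\to\infty$ using the law of large numbers. The key observation is that under the latent factor model \eqref{eqn:latent_model}, the Gram-type quantities $\bx\trans\bx$, $\bz\trans\bz$, and $\bx\trans\bz$ are sums of $n$ i.i.d.\ terms, so that $\frac1n \bx\trans\bx \to \gx^2 + \sigma_x^2$, $\frac1n \bz\trans\bz \to \gz^2 + \sigma_z^2$, and $\frac1n \bx\trans\bz \to \gx\gz$ almost surely (or in probability), with fluctuations of order $n^{-1/2}$ by the CLT.

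First I would substitute these into the expressions in \eqref{eqn:C1_B2}. Note that $C_2 = (\bx\trans\bx)(\bz\trans\bz) - (\bx\trans\bz)^2$ is of order $n^2$, with $\frac{1}{n^2}C_2 \to (\gx^2+\sigma_x^2)(\gz^2+\sigma_z^2) - \gx^2\gz^2 = \sigma_x^2\gz^2 + \sigma_z^2\gx^2 + \sigma_x^2\sigma_z^2 =: \Delta$. Similarly $C_1$ is of order $n^3$ and $B_1$ is of order $n^3$, while $B_2$ is of order $n^2$. So the combination $C_2 B_1 - 2C_1 B_2$ is of order $n^5$, and dividing by $C_2^3 \sim n^6$ gives the leading $1/n$ scaling. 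I would compute the leading coefficient by plugging in the limiting values: the leading term of $\frac{1}{n^3}B_1$ is $2[(\gx^2+\sigma_x^2)(\gz^2+\sigma_z^2) + 2\gx\gz\cdot\gx\gz]\Delta = 2(\Delta + 2\gx^2\gz^2)\Delta$ (using $(\gx^2+\sigma_x^2)(\gz^2+\sigma_z^2)=\Delta+\gx^2\gz^2$), the leading term of $\frac{1}{n^3}C_1$ is $[(\gx^2+\sigma_x^2)(\gz^2+\sigma_z^2) - 2\gx^2\gz^2]\Delta = (\Delta - \gx^2\gz^2)\Delta$, and $\frac{1}{n^2}B_2 \to 2[(\gx^2+\sigma_x^2)(\gz^2+\sigma_z^2) + \gx^2\gz^2] = 2(\Delta+2\gx^2\gz^2)$. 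Then $\frac{1}{n^5}(C_2 B_1 - 2C_1 B_2) \to \Delta\cdot 2(\Delta+2\gx^2\gz^2)\Delta - 2(\Delta-\gx^2\gz^2)\Delta\cdot 2(\Delta+2\gx^2\gz^2) = 2\Delta(\Delta+2\gx^2\gz^2)[\Delta - 2(\Delta-\gx^2\gz^2)] = 2\Delta(\Delta+2\gx^2\gz^2)(2\gx^2\gz^2 - \Delta)$. Hmm, let me recheck: this should come out to $-2\Delta^2(\Delta + 2\gx^2\gz^2)$ to match the stated answer; I'd need to be careful that $\Delta - 2(\Delta - \gx^2\gz^2) = -\Delta + 2\gx^2\gz^2$, and the target form suggests the bracket should collapse differently, so I would recompute the algebra carefully rather than trust this first pass. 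Dividing by $\frac{1}{n^6}C_2^3 \to \Delta^3$ yields $\frac{d}{d\rho}\operatorname{MSE}|_{\rho=0} \approx \frac{\sigma^{\star 2}}{n}\cdot\frac{(\text{leading coeff})}{\Delta^3}$, and using $\sigma^{\star 2} = \frac{\gy^2}{1+\gx^2/\sigma_x^2+\gz^2/\sigma_z^2} + \sigma_y^2 = \sigma_y^2 + \frac{\gy^2\sigma_x^2\sigma_z^2}{\Delta}$ should reproduce the claimed expression $-\frac{4}{n}(1 + \frac{2\gx^2\gz^2}{\Delta})(\sigma_y^2 + \frac{\gy^2\sigma_x^2\sigma_z^2}{\Delta})$.

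Second, I would make the error term rigorous. The cleanest route is to write each of $\bx\trans\bx = n(\gx^2+\sigma_x^2) + \sqrt{n}\,\xi_n$ etc.\ where the $\xi_n$ are $\mathcal{O}_p(1)$ by the CLT (equivalently, $\bx\trans\bx/n = (\gx^2+\sigma_x^2) + \mathcal{O}_p(n^{-1/2})$), and then expand $C_1, B_1, C_2, B_2$ as polynomials in these. Each is a product of such terms, so $C_2/n^2 = \Delta + \mathcal{O}_p(n^{-1/2})$, $C_1/n^3 = (\Delta-\gx^2\gz^2)\Delta + \mathcal{O}_p(n^{-1/2})$, and likewise for $B_1, B_2$. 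Then $\frac{C_2 B_1 - 2C_1 B_2}{n^5} = L + \mathcal{O}_p(n^{-1/2})$ for the constant $L$ computed above, and $\frac{C_2^3}{n^6} = \Delta^3 + \mathcal{O}_p(n^{-1/2})$ with $\Delta^3$ bounded away from $0$. A quotient of the form $\frac{L + \mathcal{O}_p(n^{-1/2})}{\Delta^3 + \mathcal{O}_p(n^{-1/2})}$ equals $\frac{L}{\Delta^3} + \mathcal{O}_p(n^{-1/2})$ (using that the denominator is bounded away from zero with probability tending to one, plus a first-order Taylor expansion of $1/(\Delta^3+u)$). Multiplying by the prefactor $\sigma^{\star 2}/n$ — which is deterministic and does not depend on $n$ — gives $\frac{\sigma^{\star 2} L}{\Delta^3}\cdot\frac1n + \frac1n\cdot\mathcal{O}_p(n^{-1/2}) = \frac{\sigma^{\star 2}L}{\Delta^3 n} + \mathcal{O}_p(n^{-3/2})$, which is exactly the claimed form.

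The main obstacle is twofold: first, the bookkeeping of the polynomial expansion to confirm that the $n^5$-order leading term of $C_2 B_1 - 2C_1 B_2$ simplifies to precisely $-4\Delta^2(\Delta + 2\gx^2\gz^2)$ (with the factor of $4$ and the correct sign), which requires carefully tracking the $\pm 2\gx\gz(\bx\trans\bz)$ cross terms in $C_1$ versus $B_1$ — the only algebraic difference between those two expressions — and the $\pm(\bx\trans\bz)^2$ sign difference between $C_2$ and $B_2$; and second, justifying cleanly that the $\mathcal{O}_p(n^{-1/2})$ relative errors propagate through the product-and-quotient structure without interacting badly, for which I would invoke the standard facts that $\mathcal{O}_p(1)\cdot\mathcal{O}_p(n^{-1/2}) = \mathcal{O}_p(n^{-1/2})$, that sums of $\mathcal{O}_p(n^{-1/2})$ terms remain $\mathcal{O}_p(n^{-1/2})$, and that division by a quantity converging in probability to a nonzero constant preserves the rate. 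Everything else is routine moment computation for Gaussians.
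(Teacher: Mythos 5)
Your route is exactly the paper's: start from the exact expression $\sigma^{\star 2}(C_2B_1-2C_1B_2)/C_2^3$ of Proposition \ref{prop:dev_neg}, substitute the CLT expansions $\bx\trans\bx = n(\gx^2+\sigma_x^2)+\mathcal{O}_p(\sqrt n)$, $\bz\trans\bz = n(\gz^2+\sigma_z^2)+\mathcal{O}_p(\sqrt n)$, $\bx\trans\bz = n\gx\gz+\mathcal{O}_p(\sqrt n)$, extract the order-$n^5$ leading term of $C_2B_1-2C_1B_2$ and the order-$n^6$ leading term of $C_2^3$, and propagate the $\mathcal{O}_p(n^{-1/2})$ relative errors through the quotient. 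Your second paragraph on the error propagation is sound and is essentially what the paper does, with explicit $\mathcal{O}_p$ remainders carried along at each step.

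The genuine problem is the leading-coefficient computation, which as written does not produce the stated constant; the discrepancy you flagged has an identifiable source. Writing $\Delta=\sigma_x^2\gz^2+\sigma_z^2\gx^2+\sigma_x^2\sigma_z^2$, note that in the common bracket $(\gx^2+\sigma_x^2)(\bz\trans\bz)+(\gz^2+\sigma_z^2)(\bx\trans\bx)\mp 2\gx\gz(\bx\trans\bz)$ appearing in both $C_1$ and $B_1$, the first \emph{two} terms each converge to $n(\gx^2+\sigma_x^2)(\gz^2+\sigma_z^2)$, so the bracket is $2n\sqb{(\gx^2+\sigma_x^2)(\gz^2+\sigma_z^2)\mp\gx^2\gz^2}+\mathcal{O}_p(\sqrt n)$; you counted only one copy, which is why you obtained $C_1/n^3\to(\Delta-\gx^2\gz^2)\Delta$ and $B_1/n^3\to 2(\Delta+2\gx^2\gz^2)\Delta$ and then a bracket $\Delta-2(\Delta-\gx^2\gz^2)$ that does not collapse correctly. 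The correct leading terms are $C_1=2n^3\Delta^2+\mathcal{O}_p(n^{5/2})$ and $B_1=4n^3(\Delta+2\gx^2\gz^2)\Delta+\mathcal{O}_p(n^{5/2})$ (your $C_2$ and $B_2$ are right), whence $C_2B_1-2C_1B_2=-4n^5\Delta^2(\Delta+2\gx^2\gz^2)+\mathcal{O}_p(n^{9/2})$ and $C_2^3=n^6\Delta^3+\mathcal{O}_p(n^{11/2})$, giving $-\frac{4}{n}\p{1+\frac{2\gx^2\gz^2}{\Delta}}\sigma^{\star 2}+\mathcal{O}_p(n^{-3/2})$; since $\sigma^{\star 2}=\sigma_y^2+\gy^2\sigma_x^2\sigma_z^2/\Delta$, this is exactly the claimed formula. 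With that miscount fixed, your argument coincides with the paper's proof.
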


The proposition establishes that the MSE is a decreasing function of $\rho$ around 0 with high probability. In other words, the agreement penalty is helpful in reducing the mean squared error. 
To further interpret the above results, we study the ratio of the derivative to the MSE itself. 

\vspace{3mm}

\begin{prop}
\label{prop:dev_ex_MSE}
The ratio of the derivative of $\operatorname{MSE}(\bx, \bz; \rho)$ at $\rho = 0$ and ${\operatorname{MSE}(\bx, \bz; 0)}$ satisfies 
\begin{equation}
\frac{\frac{d}{d\rho}\sqb{\operatorname{MSE}(\bx, \bz; \rho)} |_{\rho = 0}}{\operatorname{MSE}(\bx, \bz; 0)} 
= -\frac{4}{n}\p{1 + \frac{2 \gx^2 \gz^2}{ \sigma_x^2\gz^2 + \sigma_z^2 \gx^2 + \sigma_x^2 \sigma_z^2 }} + \mathcal{O}_p\p{n^{-\frac{3}{2}}}. 
\end{equation}
Here the notation $\mathcal{O}_p(\cdot)$ is used with the same meaning as in Proposition \ref{prop:dev_big_O}. 
\end{prop}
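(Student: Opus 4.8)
The plan is to combine the two preceding propositions directly. From Proposition \ref{prop:dev_big_O} we already have an asymptotic expansion of the numerator $\frac{d}{d\rho}[\operatorname{MSE}(\bx,\bz;\rho)]|_{\rho=0}$, so the only new ingredient needed is an asymptotic expansion of the denominator $\operatorname{MSE}(\bx,\bz;0)$, after which we divide and collect error terms. First I would compute $\operatorname{MSE}(\bx,\bz;0)$ explicitly. Setting $\rho=0$ in \eqref{eqn:theta_hat} reduces $\hat\theta$ to the ordinary least squares estimator of $Y_i$ on the two regressors $X_i$ and $Z_i$; plugging this into the definition \eqref{eqn:mse_def} and taking the conditional expectation given $\bx,\bz$ yields a bias-plus-variance decomposition. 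The variance term will be the usual $\sigma^{\star 2}$ times a trace/quadratic-form factor built from $(\bx\trans\bx),(\bz\trans\bz),(\bx\trans\bz)$, and the irreducible error contributes $\sigma^{\star 2}$ itself (using the $\sigma^{\star 2}$ already identified in \eqref{eqn:C1_B2}). Indeed, since Proposition \ref{prop:dev_neg} already records $\operatorname{MSE}$ through its derivative in terms of the quantities $C_1, C_2, B_1, B_2$, I expect $\operatorname{MSE}(\bx,\bz;0)$ to admit a clean closed form $\sigma^{\star 2}(1 + (\text{something})/C_2)$ in the same notation, which is the natural target.

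Next I would pass to the large-$n$ limit. Each of $\bx\trans\bx$, $\bz\trans\bz$, $\bx\trans\bz$ is a sum of $n$ i.i.d. terms, so by the law of large numbers $\tfrac1n \bx\trans\bx \to \gx^2+\sigma_x^2$, $\tfrac1n\bz\trans\bz \to \gz^2+\sigma_z^2$, $\tfrac1n\bx\trans\bz \to \gx\gz$, each with a $\mathcal{O}_p(n^{-1/2})$ fluctuation by the CLT. Substituting these into the closed form for $\operatorname{MSE}(\bx,\bz;0)$ gives $\operatorname{MSE}(\bx,\bz;0) = \sigma^{\star 2}(1 + \mathcal{O}_p(n^{-1}))$; more precisely the correction term is $O(1/n)$ in expectation and $\mathcal{O}_p(1/n)$ as a random variable, because the variance contribution of a two-dimensional OLS fit is of order $1/n$. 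I would be careful here to justify that the denominator is bounded away from zero with high probability (so that division is legitimate), which follows since $C_2/n^2 \to (\gx^2+\sigma_x^2)(\gz^2+\sigma_z^2) - \gx^2\gz^2 = \sigma_x^2\gz^2 + \sigma_z^2\gx^2 + \sigma_x^2\sigma_z^2 > 0$, and $\operatorname{MSE}(\bx,\bz;0) \geq \sigma^{\star 2} > 0$ trivially from the irreducible-error term.

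Finally I would form the ratio. Writing numerator $=\; -\tfrac{4}{n}\sigma^{\star 2}\!\left(1 + \tfrac{2\gx^2\gz^2}{\sigma_x^2\gz^2+\sigma_z^2\gx^2+\sigma_x^2\sigma_z^2}\right) + \mathcal{O}_p(n^{-3/2})$ — here using that the product $\sigma_y^2 + \gy^2\sigma_x^2\sigma_z^2/(\sigma_x^2\gz^2+\sigma_z^2\gx^2+\sigma_x^2\sigma_z^2)$ appearing in Proposition \ref{prop:dev_big_O} is exactly $\sigma^{\star 2}$, which I would verify by a short algebraic identity — and denominator $=\sigma^{\star 2}(1 + \mathcal{O}_p(n^{-1}))$, the $\sigma^{\star 2}$ cancels and a geometric-series expansion $(1+\mathcal{O}_p(n^{-1}))^{-1} = 1 + \mathcal{O}_p(n^{-1})$ gives the stated result, the $\mathcal{O}_p(n^{-3/2})$ remainder absorbing both the original remainder and the cross term $\tfrac{4}{n}\cdot\mathcal{O}_p(n^{-1})$. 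The main obstacle I anticipate is purely bookkeeping: verifying the algebraic identity that the numerator's scalar factor coincides with $\operatorname{MSE}(\bx,\bz;0)$'s leading constant $\sigma^{\star 2}$ (equivalently, that $\sigma_y^2 + \gy^2\sigma_x^2\sigma_z^2/(\sigma_x^2\gz^2+\sigma_z^2\gx^2+\sigma_x^2\sigma_z^2) = \gy^2/(1+\gx^2/\sigma_x^2+\gz^2/\sigma_z^2) + \sigma_y^2$), and then tracking that all error terms are genuinely $\mathcal{O}_p(n^{-3/2})$ rather than merely $\mathcal{O}_p(n^{-1})$ after the division — this requires that the $\mathcal{O}_p(n^{-1/2})$ fluctuations in $\bx\trans\bx$ etc. enter the ratio only at order $n^{-3/2}$, which holds because the leading $n^{-1/2}$ fluctuations of numerator and denominator both sit inside terms already of size $n^{-1}$ and $1$ respectively, with no $n^{-1/2}$ term surviving in either the numerator (its leading behavior is deterministic at order $n^{-1}$) or, after cancellation of $\sigma^{\star 2}$, in the ratio.
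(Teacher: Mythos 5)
Your proposal is correct and follows essentially the same route as the paper: take the numerator expansion from Proposition \ref{prop:dev_big_O}, show $\operatorname{MSE}(\bx,\bz;0)=\sigma^{\star 2}+\mathcal{O}_p(n^{-1})$ using that the bias vanishes at $\rho=0$ and the variance term (in the paper, $\sigma^{\star2}C_1/C_2^2$) is $\mathcal{O}_p(n^{-1})$, and then divide, with the $\sigma^{\star2}$ cancellation and the $\mathcal{O}_p(n^{-3/2})$ bookkeeping exactly as you describe.
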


Proposition \ref{prop:dev_ex_MSE} presents a simple form of the ratio of the derivative to the MSE itself. 
The ratio quantifies by what percentage the ``agreement'' penalty decreases the MSE. 
It can been seen from this representation that this ratio depends on the structure of the factor model, and that the agreement penalty is more helpful when the sample size $n$ is smaller. In the extreme case, when we have infinite data, i.e., when $n = \infty$, the derivative of the MSE is 0; in this case, we learn all the signals from the data even without the agreement penalty.

\subsection{Proof of Proposition \ref{prop:dev_neg}}

Here we present a lemma that is used in the proof of Proposition \ref{prop:dev_neg}. 
\begin{lemm}
\label{lemm:var_der}
Assume that $c_2 > 0$. Let $g(\rho) = (a_1\rho^2 + b_1 \rho + c_1)/(a_2\rho^2 + b_2 \rho + c_2)^2$, then 
\begin{equation}
    g'(\rho)|_{\rho = 0} = \frac{b_1 c_2 - 2 b_2 c_1}{c_2^3}. 
\end{equation}
\end{lemm}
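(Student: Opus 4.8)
The plan is to verify the formula by a direct differentiation of the quotient $g(\rho) = N(\rho)/D(\rho)^2$, where $N(\rho) = a_1\rho^2 + b_1\rho + c_1$ and $D(\rho) = a_2\rho^2 + b_2\rho + c_2$, and then evaluate at $\rho = 0$. Since $c_2 > 0$, the denominator $D(\rho)^2$ is nonzero in a neighborhood of $\rho = 0$, so $g$ is differentiable there and the quotient rule applies. First I would record $N(0) = c_1$, $N'(0) = b_1$, $D(0) = c_2$, and $D'(0) = b_2$, all immediate from the polynomial forms.

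Next I would apply the quotient rule in the form
\begin{equation}
g'(\rho) = \frac{N'(\rho) D(\rho)^2 - N(\rho)\cdot 2 D(\rho) D'(\rho)}{D(\rho)^4} = \frac{N'(\rho) D(\rho) - 2 N(\rho) D'(\rho)}{D(\rho)^3},
\end{equation}
where the simplification uses the common factor $D(\rho)$ in the numerator, valid since $D(\rho) \neq 0$ near $0$. Substituting $\rho = 0$ and the four values above gives
\begin{equation}
g'(0) = \frac{b_1 c_2 - 2 c_1 b_2}{c_2^3},
\end{equation}
which is the claimed identity.

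There is no real obstacle here; the only thing to be slightly careful about is the algebraic cancellation of one power of $D$ in passing from $D^4$ to $D^3$ in the denominator, and the bookkeeping of which coefficient plays the role of $N'(0)$ versus $D'(0)$. The hypothesis $c_2 > 0$ is used only to guarantee that $g$ is well-defined and smooth in a neighborhood of the origin so that the derivative exists; the quadratic coefficients $a_1, a_2$ and indeed $b_1, b_2, c_1$ are otherwise unconstrained. One could alternatively expand $g(\rho) = (c_1 + b_1\rho + O(\rho^2))(c_2 + b_2\rho + O(\rho^2))^{-2}$ as a Taylor series about $\rho = 0$ and read off the linear coefficient, which gives the same answer and makes the cancellation transparent; I would mention this as a remark but carry out the quotient-rule computation as the main argument.
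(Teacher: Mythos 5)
Your proposal is correct and matches the paper's proof: both apply the quotient rule to $N(\rho)/D(\rho)^2$, cancel one factor of $D(\rho)$ to get a denominator of $D(\rho)^3$, and evaluate at $\rho=0$ using $N(0)=c_1$, $N'(0)=b_1$, $D(0)=c_2$, $D'(0)=b_2$. The Taylor-expansion remark is a harmless alternative but the main argument is essentially identical to the paper's.
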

\begin{proof}
We compute the derivative of the function $g$:
\begin{equation}
    g'(\rho) = \frac{(2 a_1\rho + b_1)(a_2 \rho^2 + b_2 \rho + c_2) - 2(2 a_2\rho + b_2)(a_1 \rho^2 + b_1 \rho + c_1)}{(a_2\rho^2 + b_2 \rho + c_2)^3}.
\end{equation}
Evaluating at $\rho = 0$, we get 
\begin{equation}
    g'(\rho)|_{\rho = 0} = \frac{b_1 c_2 - 2 b_2 c_1}{c_2^3}. 
\end{equation}
\end{proof}

With this lemma, we are ready to prove the proposition. 
We start with writing down an explicit expression for the estimator $\hat\theta$. 
Let 
\begin{equation}
\tilde X=
\begin{pmatrix}
 \bx  & \bz \\
-\sqrt{\rho}\bx &  \sqrt{\rho}\bz
\end{pmatrix}
,\quad 
\tilde \by = 
\begin{pmatrix}
\by \\
\boldsymbol{0}
\end{pmatrix}.
\end{equation}
Then \eqref{eqn:theta_hat} implies that $\thetah = (\thetah_x, \thetah_z)\trans$ takes the following form:
\begin{equation}
    \thetah = \p{\tilde{X} \trans\tilde{X}}^{-1} \p{\tilde{X} \trans \tilde{\by}}. 
\end{equation}
In particular, 
\begin{equation}
    \tilde{X} \trans\tilde{X}
    =   \begin{pmatrix}
         (1+\rho)\bx\trans \bx  &  (1-\rho)\bx\trans \bz \\
         (1-\rho)\bx\trans \bz  &   (1+\rho)\bz\trans \bz
        \end{pmatrix},
\end{equation}
and 
\begin{equation}
    \tilde{X} \trans \tilde{\by}
    = (\bx \trans \by, \bz \trans \by)\trans. 
\end{equation}
Therefore,
\begin{equation}
\begin{split}
\thetah = \begin{pmatrix}
            \thetah_x\\
            \thetah_z
        \end{pmatrix} &= \begin{pmatrix}
         (1+\rho)\bx\trans \bx  &  (1-\rho)\bx\trans \bz \\
         (1-\rho)\bx\trans \bz  &   (1+\rho)\bz\trans \bz
        \end{pmatrix}^{-1}
        \begin{pmatrix}
            \bx \trans \by\\
            \bz \trans \by
        \end{pmatrix}
    = \frac{1}{\det}\begin{pmatrix}
         (1+\rho)\bz\trans \bz  &  -(1-\rho)\bx\trans \bz \\
         -(1-\rho)\bx\trans \bz  &   (1+\rho)\bx\trans \bx
        \end{pmatrix}
        \begin{pmatrix}
            \bx \trans \by\\
            \bz \trans \by
        \end{pmatrix}\\
    &= \frac{1}{\det}
    \begin{pmatrix}
            (1+\rho)(\bz\trans \bz)( \bx \trans \by) - (1-\rho)(\bx\trans \bz) (\bz \trans \by)\\
            (1+\rho)(\bx\trans \bx) (\bz \trans \by) - (1-\rho)(\bx\trans \bz) (\bx \trans \by)
        \end{pmatrix},
\end{split}
\end{equation}
where $\det = (1+\rho)^2 (\bx\trans \bx) (\bz\trans \bz) - (1-\rho)^2 (\bx\trans \bz)^2$.

We then move on to analyze the conditional distribution of $\bu$ and $\by$ on $\bx$ and $\bz$. By \eqref{eqn:latent_model}, we can write down a joint distribution of $(U_i, X_i, Z_i)$:
\begin{equation}
\begin{split}
        \begin{pmatrix}
            U_i\\
            X_i\\
            Z_i
        \end{pmatrix} &\sim  \mathcal{N}
    \begin{pmatrix}
        \begin{pmatrix}
            0\\
            0\\
            0
        \end{pmatrix}\!\!,&
        \begin{pmatrix}
            1 & \gx & \gz\\
            \gx & \gx^2 + \sigma_x^2 & \gx \gz\\
            \gz & \gx\gz & \gz^2 + \sigma_z^2
        \end{pmatrix}
    \end{pmatrix}.
\end{split}
\end{equation}
Using formulas from conditional distribution of multivariate gaussian, we get that
\begin{equation}
    U_i \mid X_i, Z_i \sim \mathcal{N}(\EE{U_i \mid X_i, Z_i}, \Var{U_i \mid X_i, Z_i}),
\end{equation}
where 
\begin{equation}
\begin{split}
    \EE{U_i \mid X_i, Z_i} &= 
        \begin{pmatrix}
            \gx & \gz
        \end{pmatrix}
        \begin{pmatrix}
            \gx^2 + \sigma_x^2 & \gx \gz\\
            \gx\gz & \gz^2 + \sigma_z^2
        \end{pmatrix}^{-1}
        \begin{pmatrix}
            X_i\\
            Z_i
        \end{pmatrix}\\
& = \frac{1}{\sigma_x^2\sigma_z^2 + \gx^2\sigma_z^2 + \gz^2 \sigma_x^2} 
    \begin{pmatrix}
            \gx & \gz
        \end{pmatrix}
        \begin{pmatrix}
            \gz^2 + \sigma_z^2 & -\gx \gz\\
            -\gx\gz & \gx^2 + \sigma_x^2
        \end{pmatrix}
        \begin{pmatrix}
            X_i\\
            Z_i
        \end{pmatrix}\\
& = \frac{\gx X_i}{\sigma_x^2 + \gx^2 + \gz^2 \sigma_x^2/\sigma_z^2}
+ \frac{\gz Z_i}{\sigma_z^2 + \gz^2 + \gx^2 \sigma_z^2/\sigma_x^2},
\end{split}
\end{equation}
and 
\begin{equation}
\begin{split}
    \Var{U_i \mid X_i, Z_i} &= 1 - 
        \begin{pmatrix}
            \gx & \gz
        \end{pmatrix}
        \begin{pmatrix}
            \gx^2 + \sigma_x^2 & \gx \gz\\
            \gx\gz & \gz^2 + \sigma_z^2
        \end{pmatrix}^{-1}
        \begin{pmatrix}
            \gx\\
            \gz
        \end{pmatrix}\\
& = 1 - \frac{1}{\sigma_x^2\sigma_z^2 + \gx^2\sigma_z^2 + \gz^2 \sigma_x^2}
    \begin{pmatrix}
            \gx & \gz
        \end{pmatrix}
        \begin{pmatrix}
            \gz^2 + \sigma_z^2 & -\gx \gz\\
            -\gx\gz & \gx^2 + \sigma_x^2
        \end{pmatrix}
        \begin{pmatrix}
            \gx\\
            \gz
        \end{pmatrix}\\
& = 1 - \frac{ \gx^2\sigma_z^2 + \gz^2 \sigma_x^2}{\sigma_x^2\sigma_z^2 + \gx^2\sigma_z^2 + \gz^2 \sigma_x^2}
 = \frac{1}{1 + \gamma_x^2/\sigma_x^2 + \gamma_z^2/\sigma_z^2}.
\end{split}
\end{equation}
Since $Y_i = \gy U_i + \varepsilon_{yi}$, the above analysis implies that
\begin{equation}
\label{eqn:y_cond_0}
    Y_i \mid X_i, Z_i \sim \mathcal{N}(\EE{Y_i \mid X_i, Z_i}, \Var{Y_i \mid X_i, Z_i}),
\end{equation}
where 
\begin{equation}
\label{eqn:y_cond_1}
    \EE{Y_i \mid X_i, Z_i} = \gy \EE{U_i \mid X_i, Z_i}
 = \frac{\gx \gy X_i}{\sigma_x^2 + \gx^2 + \gz^2 \sigma_x^2/\sigma_z^2}
+ \frac{\gz \gy Z_i}{\sigma_z^2 + \gz^2 + \gx^2 \sigma_z^2/\sigma_x^2},
\end{equation}
and 
\begin{equation}
\label{eqn:y_cond_2}
    \Var{Y_i \mid X_i, Z_i}
    = \gy^2\Var{U_i \mid X_i, Z_i} + \sigma_y^2
 = \frac{\gy^2}{1 + \gamma_x^2/\sigma_x^2 + \gamma_z^2/\sigma_z^2} + \sigma_y^2.
\end{equation}

Let 
\begin{equation}
\label{eqn:thetas_def_second}
    \thetas_x = \frac{\gx \gy}{\sigma_x^2 + \gx^2 + \gz^2 \sigma_x^2/\sigma_z^2},
    \qquad 
    \thetas_z = \frac{\gz \gy}{\sigma_z^2 + \gz^2 + \gx^2 \sigma_z^2/\sigma_x^2},
    \qquad
    \sigma^{\star2} = \frac{\gy^2}{1 + \gamma_x^2/\sigma_x^2 + \gamma_z^2/\sigma_z^2} + \sigma_y^2,
\end{equation}
then  the above shows that we can express $Y_i$ as
\begin{equation}
\label{eqn:Y_decomp}
    Y_i = \thetas_x X_i + \thetas_z Z_i + \varepsilon^{\star}_i, 
\end{equation}
where $\varepsilon^{\star}_i \indp (X_i, Z_i)$ and $\varepsilon^{\star}_i \sim \mathcal{N}(0, \sigma^{\star2})$. 
In words, $Y_i$ can be decomposed into two independent terms: a linear combination of $X_i$ and $Z_i$, and an error term independent of $(X_i, Z_i)$. 

With the above tools, we are ready to study the MSE. Using \eqref{eqn:Y_decomp}, we can write
\begin{equation}
\label{eqn:MSE_decomp}
\begin{split}
    \operatorname{MSE}(\bx, \bz; \rho) 
    &= \EE{\p{Y\new - \p{X\new \hat{\theta}_x + Z\new \hat{\theta}_z}}^2 \mid \bx, \bz}\\
    &= \EE{\p{\thetas_x X\new + \thetas_z Z\new + \varepsilon^{\star}\new - \p{X\new \hat{\theta}_x + Z\new \hat{\theta}_z}}^2 \mid \bx, \bz}\\
    & = \EE{\p{(\thetas_x - \hat{\theta}_x) X\new + (\thetas_z - \hat{\theta}_z) Z\new + {\varepsilon^{\star}\new} }^2 \mid \bx, \bz}\\
    & = \EE{\p{(\thetas_x - \hat{\theta}_x) X\new + (\thetas_z - \hat{\theta}_z) Z\new }^2 \mid \bx, \bz}  + \EE{{\varepsilon^{\star}\new }^2 \mid \bx, \bz}\\
    & = \EE{\p{(\thetas_x - \hat{\theta}_x) X\new + (\thetas_z - \hat{\theta}_z) Z\new }^2 \mid \bx, \bz}  + \sigma^{\star 2}.
\end{split}
\end{equation}
Here the cross terms vanish because $\varepsilon^{\star}\new \indp (X \new, Z \new)$. Since the new dataset is independent of the training dataset, we can further simply the above:
\begin{equation}
\begin{split}
    \operatorname{MSE}(\bx, \bz; \rho) 
&= \EE{\p{(\thetas_x - \hat{\theta}_x) X\new}^2\mid \bx, \bz}  + \EE{\p{(\thetas_z - \hat{\theta}_z) Z\new }^2 \mid \bx, \bz} \\
& \qquad \qquad + 2\EE{\p{\thetas_z - \hat{\theta}_z}\p{\hat{\theta}_x - \thetas_x } Z\new X\new \mid \bx, \bz} 
    + \sigma^{\star 2}\\
& = \EE{\p{\hat{\theta}_x - \thetas_x  }^2\mid \bx, \bz} \EE{X \new^2} + \EE{\p{\hat{\theta}_z - \thetas_z }^2 \mid \bx, \bz} \EE{Z \new^2}\\
& \qquad \qquad + 2\EE{\p{\hat{\theta}_z - \thetas_z}\p{\hat{\theta}_x - \thetas_x }  \mid \bx, \bz} \EE{Z\new X\new}
    + \sigma^{\star 2}\\
& = \EE{\p{\hat{\theta}_x - \thetas_x  }^2\mid \bx, \bz} (\gx^2 + \sigma_x^2) + \EE{\p{\hat{\theta}_z - \thetas_z }^2 \mid \bx, \bz} \p{\gz^2 + \sigma_z^2}\\
& \qquad \qquad + 2\EE{\p{\hat{\theta}_z - \thetas_z}\p{\hat{\theta}_x - \thetas_x }  \mid \bx, \bz} \p{\gx \gz} 
+ \sigma^{\star 2}. 
\end{split}
\end{equation}
We can then further decompose the terms into squared bias plus variance. 
\begin{equation}
\label{eqn:mse_decom}
\begin{split}
    \operatorname{MSE}(\bx, \bz; \rho) 
& = \EE{\hat{\theta}_x - \thetas_x  \mid \bx, \bz}^2 (\gx^2 + \sigma_x^2) + \Var{\hat{\theta}_x \mid \bx, \bz} (\gx^2 + \sigma_x^2)\\
& \qquad \qquad + \EE{\hat{\theta}_z - \thetas_z \mid \bx, \bz}^2 \p{\gz^2 + \sigma_z^2} + \Var{\hat{\theta}_z \mid \bx, \bz} \p{\gz^2 + \sigma_z^2}\\
& \qquad \qquad + 2\EE{\hat{\theta}_z - \thetas_z\mid \bx, \bz} \EE{\hat{\theta}_x - \thetas_x   \mid \bx, \bz} \p{\gx \gz} 
+ 2\Cov{\hat{\theta}_z, \hat{\theta}_x \mid \bx, \bz}  \p{\gx \gz}
+ \sigma^{\star 2}\\
& = B^2(\bx, \bz; \rho) + V(\bx, \bz; \rho) + \sigma^{\star 2},
\end{split}
\end{equation}

\noindent where $B^2(\bx, \bz; \rho) = \EE{\hat{\theta}_x - \thetas_x  \mid \bx, \bz}^2 (\gx^2 + \sigma_x^2) + \EE{\hat{\theta}_z - \thetas_z \mid \bx, \bz}^2 \p{\gz^2 + \sigma_z^2} + 2\EE{\hat{\theta}_z - \thetas_z\mid \bx, \bz} \EE{\hat{\theta}_x - \thetas_x   \mid \bx, \bz} \p{\gx \gz}$ 
and
$V(\bx, \bz; \rho) = \Var{\hat{\theta}_x \mid \bx, \bz} (\gx^2 + \sigma_x^2) + \Var{\hat{\theta}_z \mid \bx, \bz} \p{\gz^2 + \sigma_z^2} + 2\Cov{\hat{\theta}_z, \hat{\theta}_x \mid \bx, \bz}  \p{\gx \gz} $ are the sum of bias related terms and the sum of variance related terms, respectively.

We can then use \eqref{eqn:y_cond_0} - \eqref{eqn:y_cond_2} to study the bias and variance of the estimators $\thetah_x$ and $\thetah_z$. We start with the bias. By \eqref{eqn:Y_decomp}, we have $\EE{\by \mid \bx, \bz} = \thetas_x \bx + \thetas_z \bz$. Therefore,
\begin{equation}
\begin{split}
\EE{\thetah_x \mid \bx, \bz} &=     
\EE{\frac{1}{\det}\p{ (1+\rho)(\bz\trans \bz)( \bx \trans \by) - (1-\rho)(\bx\trans \bz) (\bz \trans \by)} \mid \bx, \bz}\\
& = \frac{1}{\det}\p{ (1+\rho)(\bz\trans \bz)( \bx \trans \EE{\by\mid \bx, \bz}) - (1-\rho)(\bx\trans \bz) (\bz \trans \EE{\by \mid \bx, \bz})}\\
& = \frac{1}{\det}\p{ (1+\rho)(\bz\trans \bz)( \bx \trans \p{\thetas_x \bx + \thetas_z \bz}) - (1-\rho)(\bx\trans \bz) (\bz \trans \p{\thetas_x \bx + \thetas_z \bz})}\\
& = \frac{1}{\det}\p{ (1+\rho)\sqb{\thetas_x(\bx \trans \bx)(\bz\trans \bz) + \thetas_z(\bz\trans \bz)(\bx \trans \bz)} - (1-\rho)\sqb{\thetas_x(\bx\trans \bz)^2 + \thetas_z(\bx \trans \bz)(\bz\trans \bz)}}\\
& = \frac{1}{\det}\p{ 
\thetas_x\sqb{(\bx \trans \bx)(\bz\trans \bz) - (\bx\trans \bz)^2}
+ \rho \p{\thetas_x\sqb{(\bz\trans \bz)(\bx \trans \bx) + (\bx\trans \bz)^2}  + 2\thetas_z(\bx \trans \bz)(\bz\trans \bz) }}.
\end{split}
\end{equation}
Note that 
\begin{equation}
\begin{split}
\det 
&= (1+\rho)^2 (\bx\trans \bx) (\bz\trans \bz) - (1-\rho)^2 (\bx\trans \bz)^2\\
&= (\bx\trans \bx) (\bz\trans \bz) - (\bx\trans \bz)^2
+ 2\rho \sqb{(\bx\trans \bx) (\bz\trans \bz) + (\bx\trans \bz)^2}
+ \rho^2 \sqb{(\bx\trans \bx) (\bz\trans \bz) - (\bx\trans \bz)^2}.
\end{split}
\end{equation}
Therefore,
\begin{equation}
\begin{split}
 & \qquad   \EE{\thetah_x  - \thetas_x\mid \bx, \bz}\\
& = \frac{
\thetas_x\sqb{(\bx \trans \bx)(\bz\trans \bz) - (\bx\trans \bz)^2}
+ \rho \p{\thetas_x\sqb{(\bz\trans \bz)(\bx \trans \bx) + (\bx\trans \bz)^2}  + 2\thetas_z(\bx \trans \bz)(\bz\trans \bz) }}{(\bx\trans \bx) (\bz\trans \bz) - (\bx\trans \bz)^2
+ 2\rho \sqb{(\bx\trans \bx) (\bz\trans \bz) + (\bx\trans \bz)^2}
+ \rho^2 \sqb{(\bx\trans \bx) (\bz\trans \bz) - (\bx\trans \bz)^2}} - \thetas_x\\
& = \frac{
\rho \p{\thetas_x\sqb{-(\bz\trans \bz)(\bx \trans \bx) - (\bx\trans \bz)^2}  + 2\thetas_z(\bx \trans \bz)(\bz\trans \bz) } - \rho^2 \thetas_x\sqb{(\bx\trans \bx) (\bz\trans \bz) - (\bx\trans \bz)^2}
}{(\bx\trans \bx) (\bz\trans \bz) - (\bx\trans \bz)^2
+ 2\rho \sqb{(\bx\trans \bx) (\bz\trans \bz) + (\bx\trans \bz)^2}
+ \rho^2 \sqb{(\bx\trans \bx) (\bz\trans \bz) - (\bx\trans \bz)^2}}\\
& = \frac{b_1 \rho + a_1 \rho^2}{c_2 + b_2 \rho + a_2 \rho^2},
\end{split}
\end{equation}
where $b_1, a_1, c_2, b_2, a_2$ are expressions depending on $\bx$ and $\bz$ but not on $\rho$. We can then clearly see that when $\rho = 0$, $\EE{\thetah_x  - \thetas_x\mid \bx, \bz} = 0$. By symmetry, we have the same property for $\EE{\thetah_z - \thetas_z\mid \bx, \bz}$. Therefore,
\begin{equation}
\begin{split}
 \frac{d}{d\rho}B^2(\bx, \bz; \rho) &=
(\gx^2 + \sigma_x^2) \frac{d}{d\rho}\EE{\hat{\theta}_x - \thetas_x  \mid \bx, \bz}^2|_{\rho = 0} 
+  \p{\gz^2 + \sigma_z^2} \frac{d}{d\rho}\EE{\hat{\theta}_z - \thetas_z \mid \bx, \bz}^2|_{\rho = 0}\\
&\qquad \qquad + 2\p{\gx \gz} \frac{d}{d\rho}\p{\EE{\hat{\theta}_z - \thetas_z\mid \bx, \bz} \EE{\hat{\theta}_x - \thetas_x   \mid \bx, \bz}} |_{\rho = 0} \\
&= 2(\gx^2 + \sigma_x^2) \p{\EE{\hat{\theta}_x - \thetas_x  \mid \bx, \bz}|_{\rho = 0}} \frac{d}{d\rho}\EE{\hat{\theta}_x - \thetas_x  \mid \bx, \bz}|_{\rho = 0} \\
& \qquad \qquad +  2\p{\gz^2 + \sigma_z^2} \p{\EE{\hat{\theta}_z - \thetas_z \mid \bx, \bz}|_{\rho = 0}}\frac{d}{d\rho}\EE{\hat{\theta}_z - \thetas_z \mid \bx, \bz}|_{\rho = 0}\\
&\qquad \qquad + 2\p{\gx \gz} \p{\EE{\hat{\theta}_z - \thetas_z\mid \bx, \bz} |_{\rho = 0}}\frac{d}{d\rho} \EE{\hat{\theta}_x - \thetas_x   \mid \bx, \bz}|_{\rho = 0} \\
&\qquad \qquad + 2\p{\gx \gz} \p{\EE{\hat{\theta}_x - \thetas_x\mid \bx, \bz} |_{\rho = 0}}\frac{d}{d\rho} \EE{\hat{\theta}_z - \thetas_z   \mid \bx, \bz} |_{\rho = 0}.
\end{split}
\end{equation}
Since $\EE{\hat{\theta}_x - \thetas_x\mid \bx, \bz} |_{\rho = 0} = \EE{\hat{\theta}_z - \thetas_z\mid \bx, \bz} |_{\rho = 0} = 0$, we have 
\begin{equation}
    \frac{d}{d\rho}B^2(\bx, \bz; \rho) = 0. 
\end{equation}

It remains to study $ \frac{d}{d\rho}V(\bx, \bz; \rho)$. From the form of $\thetah_x$, we can get that
\begin{equation}
\begin{split}
\Var{\thetah_x \mid \bx, \bz} 
&= \frac{\sigma^{\star 2}}{\det^2} \norm{ (1+\rho)(\bz\trans \bz)\bx - (1-\rho)(\bx\trans \bz) \bz}_2^2\\
&= \frac{\sigma^{\star 2}}{\det^2} \sqb{(1+\rho)^2 (\bz\trans \bz)^2 (\bx\trans \bx) + (1- \rho)^2 (\bx\trans \bz)^2 (\bz\trans \bz) - 2(1+\rho)(1 - \rho) (\bx\trans \bz)^2 (\bz\trans \bz)}\\
& = \frac{\sigma^{\star 2} (\bz\trans \bz)}{\det^2} \p{ \sqb{(\bx\trans \bx)(\bz\trans \bz) - (\bx\trans \bz)^2}
+ 2 \sqb{(\bx\trans \bx)(\bz\trans \bz) - (\bx\trans \bz)^2}\rho
+ a_{v1} \rho^2},
\end{split}
\end{equation}
for some $a_{v1}$ depending on $\bx$ and $\bz$ but not on $\rho$. Similarly, we get that 
\begin{equation}
\begin{split}
\Var{\thetah_z \mid \bx, \bz} 
& = \frac{\sigma^{\star 2} (\bx\trans \bx)}{\det^2} \p{ \sqb{(\bx\trans \bx)(\bz\trans \bz) - (\bx\trans \bz)^2}
+ 2 \sqb{(\bx\trans \bx)(\bz\trans \bz) - (\bx\trans \bz)^2}\rho
+ a_{v2} \rho^2},
\end{split}
\end{equation}
for some $a_{v2}$ depending on $\bx$ and $\bz$ but not on $\rho$. For the covariance term, 
\begin{equation}
\begin{split}
\Cov{\thetah_x, \thetah_z \mid \bx, \bz} 
&= \frac{\sigma^{\star 2}}{\det^2} \sqb{ (1+\rho)(\bz\trans \bz)\bx - (1-\rho)(\bx\trans \bz) \bz}\trans \sqb{ (1+\rho)(\bx\trans \bx)\bz - (1-\rho)(\bz\trans \bx) \bx}\\
& = \frac{\sigma^{\star 2}}{\det^2}\p{(-1 + 3\rho)(1 + \rho)(\bx\trans \bx)(\bz\trans \bz)(\bx\trans \bz) + (1 - \rho)^2(\bx\trans \bz)^3}\\
& = \frac{\sigma^{\star 2} (\bx\trans \bz)}{\det^2} \p{ -\sqb{(\bx\trans \bx)(\bz\trans \bz) - (\bx\trans \bz)^2}
+ 2 \sqb{(\bx\trans \bx)(\bz\trans \bz) - (\bx\trans \bz)^2}\rho
+ a_{v3} \rho^2},
\end{split}
\end{equation}
for some $a_{v3}$ depending on $\bx$ and $\bz$ but not on $\rho$. 
Combining the three terms, we get
\begin{equation}
\label{eqn:variance_form}
\begin{split}
V(\bx, \bz; \rho) 
&= \Var{\hat{\theta}_x \mid \bx, \bz} (\gx^2 + \sigma_x^2) + \Var{\hat{\theta}_z \mid \bx, \bz} \p{\gz^2 + \sigma_z^2} + \Cov{\hat{\theta}_z, \hat{\theta}_x \mid \bx, \bz}  \p{\gx \gz}\\
& = \sigma^{\star 2} \frac{C_1 + B_1 \rho + A_1 \rho^2}{\det^2}
= \sigma^{\star 2} \frac{C_1 + B_1 \rho + A_1 \rho^2}{(C_2 + B_2 \rho + A_2 \rho^2)^2},
\end{split}
\end{equation}
where 
\begin{equation}
\begin{split}
C_1 &= \sqb{(\gx^2 + \sigma_x^2)(\bz \trans \bz) + (\gz^2 + \sigma_z^2)(\bx \trans \bx) - 2\gx \gz (\bx \trans \bz)}((\bx\trans \bx)(\bz\trans \bz) - (\bx\trans \bz)^2),\\
B_1 & = 2\sqb{(\gx^2 + \sigma_x^2)(\bz \trans \bz) + (\gz^2 + \sigma_z^2)(\bx \trans \bx) + 2\gx \gz (\bx \trans \bz)}((\bx\trans \bx)(\bz\trans \bz) - (\bx\trans \bz)^2),\\
C_2 &= (\bx\trans \bx)(\bz\trans \bz) - (\bx\trans \bz)^2,\\
B_2 &= 2\p{(\bx\trans \bx)(\bz\trans \bz) + (\bx\trans \bz)^2}. 
\end{split}
\end{equation}
By Lemma \ref{lemm:var_der}, 
$\frac{d}{d\rho}V(\bx, \bz; \rho)|_{\rho = 0} = \sigma^{\star 2} (C_2 B_1 - 2 C_1 B_2)/C_2^3$. 

Finally by \eqref{eqn:mse_decom}
\begin{equation}
\begin{split}
   \frac{d}{d\rho}\operatorname{MSE}(\bx, \bz; \rho) |_{\rho = 0} &= \frac{d}{d\rho}B^2(\bx, \bz; \rho)|_{\rho = 0} + \frac{d}{d\rho}V(\bx, \bz; \rho)|_{\rho = 0}  \\
&= \frac{d}{d\rho}V(\bx, \bz; \rho)|_{\rho = 0} = \sigma^{\star 2} (C_2 B_1 - 2 C_1 B_2)/C_2^3.
\end{split}
\end{equation}

\subsection{Proof of Proposition \ref{prop:dev_big_O}}
By the central limit theorem, we have that
\begin{equation}
\label{eqn:asym_C1_etc}
    \bx \trans \bx = n\p{\gx^2 + \sigma_x^2} + \oo_p\p{\sqrt{n}},
    \qquad 
    \bz \trans \bz = n\p{\gz^2 + \sigma_z^2} + \oo_p\p{\sqrt{n}},
    \qquad 
    \bx \trans \bz = n\p{\gx \gz} + \oo_p\p{\sqrt{n}}.
\end{equation}
Plugging into \eqref{eqn:C1_B2} gives
\begin{equation}
\begin{split}
C_1 &= 2 n^3 \sqb{(\gx^2 + \sigma_x^2)(\gz^2 + \sigma_z^2) - \gx^2 \gz^2 }^2 + \oo_p\p{n^{5/2}},\\
B_1 & = 4 n^3 \sqb{(\gx^2 + \sigma_x^2)(\gz^2 + \sigma_z^2) + \gx^2 \gz^2 }\sqb{(\gx^2 + \sigma_x^2)(\gz^2 + \sigma_z^2) - \gx^2 \gz^2 } + \oo_p\p{n^{5/2}},\\
C_2 &= n^2 \sqb{(\gx^2 + \sigma_x^2)(\gz^2 + \sigma_z^2) - \gx^2 \gz^2 } + \oo_p\p{n^{3/2}},\\
B_2 &= 2 n^2 \sqb{(\gx^2 + \sigma_x^2)(\gz^2 + \sigma_z^2) + \gx^2 \gz^2 } + \oo_p\p{n^{3/2}}. 
\end{split}
\end{equation}
Thus we have that
\begin{equation}
\begin{split}
C_1 B_2 &= 4 n^5 \sqb{(\gx^2 + \sigma_x^2)(\gz^2 + \sigma_z^2) + \gx^2 \gz^2 } \sqb{(\gx^2 + \sigma_x^2)(\gz^2 + \sigma_z^2) - \gx^2 \gz^2 }^2 + \oo_p\p{n^{9/2}},\\
C_2 B_1 & = 4 n^5 \sqb{(\gx^2 + \sigma_x^2)(\gz^2 + \sigma_z^2) + \gx^2 \gz^2 } \sqb{(\gx^2 + \sigma_x^2)(\gz^2 + \sigma_z^2) - \gx^2 \gz^2 }^2 + \oo_p\p{n^{9/2}}.\\
\end{split}
\end{equation}
Therefore, 
\begin{equation}
C_2 B_1 - 2C_1 B_2 = -4 n^5 \sqb{(\gx^2 + \sigma_x^2)(\gz^2 + \sigma_z^2) + \gx^2 \gz^2 } \sqb{(\gx^2 + \sigma_x^2)(\gz^2 + \sigma_z^2) - \gx^2 \gz^2 }^2 + \oo_p\p{n^{9/2}}. 
\end{equation}
Now we also know that 
\begin{equation}
C_2^3 = n^6 \sqb{(\gx^2 + \sigma_x^2)(\gz^2 + \sigma_z^2) - \gx^2 \gz^2 }^3 + \oo_p\p{n^{11/2}}.
\end{equation}
Hence
\begin{equation}
\begin{split}
&\frac{d}{d\rho}\sqb{\operatorname{MSE}(\bx, \bz; \rho)} |_{\rho = 0} \\
&\qquad = \sigma^{\star 2}(C_2 B_1 - 2 C_1 B_2)/C_2^3\\
& \qquad= \frac{-4 n^5 \sqb{(\gx^2 + \sigma_x^2)(\gz^2 + \sigma_z^2) + \gx^2 \gz^2 } \sqb{(\gx^2 + \sigma_x^2)(\gz^2 + \sigma_z^2) - \gx^2 \gz^2 }^2 }{n^6 \sqb{(\gx^2 + \sigma_x^2)(\gz^2 + \sigma_z^2) - \gx^2 \gz^2 }^3}\sigma^{\star 2} + \oo_p\p{n^{-3/2}}\\
& \qquad= -\frac{4}{n} \frac{(\gx^2 + \sigma_x^2)(\gz^2 + \sigma_z^2) + \gx^2 \gz^2}{(\gx^2 + \sigma_x^2)(\gz^2 + \sigma_z^2) - \gx^2 \gz^2}\sigma^{\star 2} + \oo_p\p{n^{-3/2}}\\
& \qquad= -\frac{4}{n}\p{1 + \frac{2 \gx^2\gz^2}{\sigma_x^2\gz^2 + \sigma_z^2\gx^2 + \sigma_x^2 \sigma_z^2}}\sigma^{\star 2} + \oo_p\p{n^{-3/2}}\\
& \qquad=  -\frac{4}{n}\p{1 + \frac{2 \gx^2 \gz^2}{ \sigma_x^2\gz^2 + \sigma_z^2 \gx^2 + \sigma_x^2 \sigma_z^2 }}\p{\sigma_y^2 + \frac{\gy^2 \sigma_x^2 \sigma_z^2}{\sigma_x^2\gz^2 + \sigma_z^2 \gx^2 + \sigma_x^2 \sigma_z^2} } + \mathcal{O}_p\p{n^{-3/2}}. 
\end{split}
\end{equation}

\subsection{Proof of Proposition \ref{prop:dev_ex_MSE}}
For $\operatorname{MSE}(\bx, \bz; 0)$, we have that by \eqref{eqn:mse_decom}, 
\begin{equation}
    \operatorname{MSE}(\bx, \bz; 0) 
= B^2(\bx, \bz; 0) + V(\bx, \bz; 0) + \sigma^{\star 2}
= V(\bx, \bz; 0) + \sigma^{\star 2}. 
\end{equation}
Here we make use of the fact that when $\rho = 0$, $\EE{\thetah_x  - \thetas_x\mid \bx, \bz} = \EE{\thetah_z - \thetas_z\mid \bx, \bz} = 0$ and that $B^2(\bx, \bz; 0) = 0$. For $V(\bx, \bz; 0)$, we have that by \eqref{eqn:variance_form} and \eqref{eqn:asym_C1_etc},
\begin{equation}
V(\bx, \bz; 0) 
= \sigma^{\star 2} \frac{C_1 }{C_2^2}
= \frac{4 \sigma^{\star 2}}{n}  + \oo_p\p{n^{-3/2}}
= \oo_p\p{\frac{1}{n}}. 
\end{equation}
Therefore,
\begin{equation}
\operatorname{MSE}(\bx, \bz; 0) = V(\bx, \bz; 0) + \sigma^{\star 2} = \sigma^{\star 2} +  \oo_p\p{\frac{1}{n}}. 
\end{equation}
Thus, together with the result in Proposition \ref{prop:dev_big_O}, we have 
\begin{equation}
\frac{\frac{d}{d\rho}\sqb{\operatorname{MSE}(\bx, \bz; \rho)} |_{\rho = 0}}{\operatorname{MSE}(\bx, \bz; 0)} 
= - \frac{4}{n}\p{1 + \frac{2 \gx^2 \gz^2}{ \sigma_x^2\gz^2 + \sigma_z^2 \gx^2 + \sigma_x^2 \sigma_z^2 }} + \mathcal{O}_p\p{n^{-\frac{3}{2}}}. 
\end{equation}

\section{Distribution of predicted versus true time to delivery for the labor onset prediction example}

We show in Figure \ref{fig:distribution} the distribution of predicted and true time to delivery for each patient, which gives a better sense of the quality of the predictions for the regression task.
The left plot shows the distribution of time to delivery for all patients at their first time points in the longitudinal study; the right plot shows the predicted versus true time to delivery for the training and test samples. This is based on one random split of the training and test sets of 40 and 13 patients, respectively.

\begin{figure}[h] 
\includegraphics[width=0.9\textwidth]{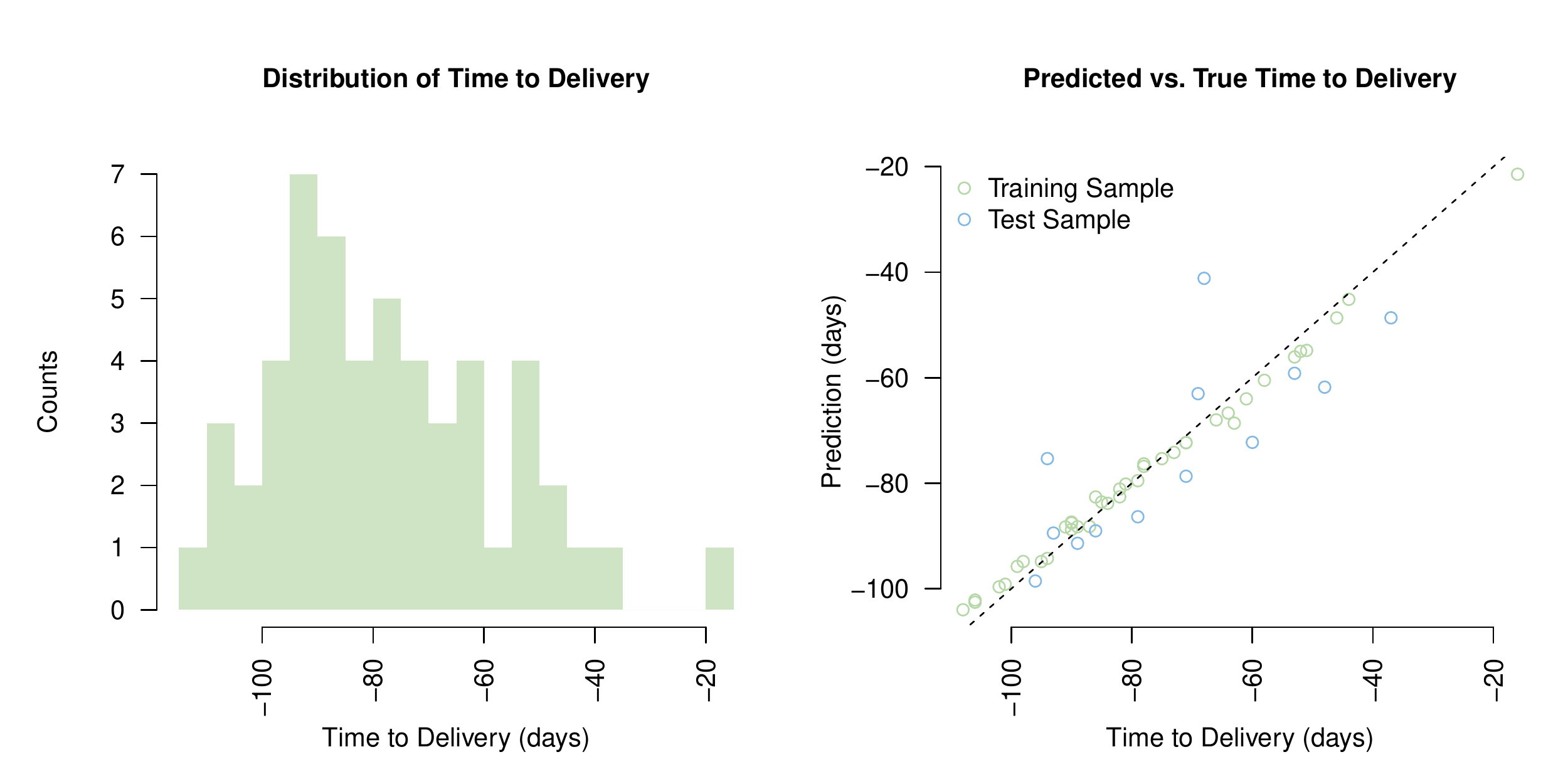}
    \caption{{\em Distribution of time to delivery and predicted versus true time to delivery for training and test samples.} The predictions were derived from cooperative learning.}
    \label{fig:distribution}
\end{figure}

\section{Procedure for generating the imaging and ``omics'' data}
\label{sec:appendix_imaging}

Here we outline the detailed procedure for data generation in the simulation study with imaging and ``omics'' data in Algorithm \ref{alg:simulation_two_modality}.
The ``omics'' data ($X$), imaging data ($Z$), and the response $\y$ are generated such that there are correlations between $X$, $Z$, and $\y$.

\vspace{6mm}

\begin{algorithm}[H]
\KwIn{Parameters $n, p_{x}, p_{u}, s_{u}, t, \sigma, \bbeta_{u}, I_{\text{max}}, \text{ndim}, \text{threshold}$.}
\KwOut{$X \in \mR^{n\times p_{x}}$ (omics), $Z \in \mR^{n \times \text{ndim}\times \text{ndim} \times1}$ (images assuming one color channel), $\y \in \mR^{n}$.}
\begin{enumerate}
    \item $x_j \in \mR^{n}$ distributed i.i.d. MVN$(0, I_{n})$ for $j = 1,2,\ldots,p_{x}$
    \item For $i = 1, 2, \ldots, p_{u}$ ($p_{u} < p_{x}$, where $p_{u}$ corresponds to the number of factors):
    \begin{enumerate}
        \item $u_{i} \in \mR^{n}$ distributed i.i.d. MVN$(0, s_{u}^2I_{n})$ 
        \item $x_{i} = x_{i} + t * u_{i}$
    \end{enumerate}
    \item  $U = [u_{1}, u_{2}, \ldots, u_{p_{u}}]$, $X = [x_{1}, x_{2}, \ldots, x_{p_{x}}]$
    \item $\y_{u} = U\bbeta_{u} + \epsilon$ where $\epsilon \in \mR^{n}$ distributed i.i.d. MVN$(0, \sigma^2 I_{n})$
    \item For $i = 1, 2, \ldots, n$:
    \begin{enumerate}
        \item $P_{i} = \frac{1}{1+\exp(\y_{u_{i}})}$, $\y_{i} \sim \text{Bernoulli}(P_{i})$
        \item Generate a 2D pixel matrix of image $Z_{i} \in \mR^{\text{dim} \times \text{dim} \times 1}$
        \item Generate a polygon $\text{PG}_{i}$ inside $Z_{i}$, defined by 4 vertices [$v_{1}, v_{2}, v_{3}, v_{4}$] on the axes, i.e. $v_{1} = [0,a_{1}], v_{2} = [0,a_{2}], v_{3} = [a_{3},0], v_{4} = [a_{4},0]$, where $a_{1} \sim \text{Unif}(\frac{\text{ndim}}{2}, \text{ndim}), a_{2} \sim \text{Unif}(-\text{ndim}, -\frac{\text{ndim}}{2}), a_{3} \sim \text{Unif}(\frac{\text{ndim}}{2}, \text{ndim}), a_{4} \sim \text{Unif}(-\text{ndim}, -\frac{\text{ndim}}{2})$
        \item Randomly sample points from $Z_{i}$: if the point $[x', y']$ falls inside the polygon $\text{PG}_{i}$, i.e.  $[x', y'] \in \text{PG}_{i}$, then $Z_{i}[x', y'] \sim \text{Unif}(0,1)$
        \item If $\y_{i} = 1$, $I_{\text{disease}} = I_{\text{max}} \times \y_{u_{i}}$, where $I_{\text{max}}$ is the maximum intensity of pixel values for images,
        \begin{itemize}
            \item For $x' = 1,2,\ldots,\text{ndim}$:
            \begin{itemize}
                \item For $y' = 1,2,\ldots,\text{ndim}$:
                \begin{itemize}
                    \item $P(x', y') \sim \text{Unif} (0,1)$
                    \item If $[x', y'] \in \text{PG}_{i}$ and $P(x', y') <$ threshold, $Z_{i}[x', y'] = I_{\text{disease}}$
                \end{itemize}
            \end{itemize}
        \end{itemize}
    \end{enumerate}
\end{enumerate}
\caption{\em Simulation procedure for generating the imaging and ``omics'' data.} \label{alg:simulation_two_modality}
\end{algorithm}

\end{appendix}
\end{document}